\documentclass{sigplanconf}
\usepackage{times,amsmath,epsfig}

\usepackage[]{graphics}
\usepackage[]{graphicx}

\usepackage{latexsym}
\usepackage{amssymb,amsmath,amsfonts}
\usepackage{stmaryrd}
\usepackage{newalg}

\usepackage{tikz}
\usetikzlibrary{arrows}
\usetikzlibrary{shapes.misc}

\usepackage{theorems}

\setlength{\pdfpagewidth}{8.5in}
\setlength{\pdfpageheight}{11in}

\newcommand{\sema}[2]{\llbracket #1 \rrbracket
}



%
%
%

\newcommand{\staterule}[3]{%
  $\begin{array}{@{}l}%
   \mbox{#1}\\%
   \begin{array}{l}
   #2\\
   \hline
   \raisebox{0ex}[2.5ex]{\strut}#3%
   \end{array}
  \end{array}$}

\newlength{\mygap}
\setlength{\mygap}{0.8ex}
\newlength{\smallgap}
\setlength{\smallgap}{0.3ex}

%
%
%

\newcommand{\hbra}{
\hbox to \linewidth{\vrule width0.3mm height 1.8mm depth-0.3mm
                    \leaders\hrule height1.8mm depth-1.5mm\hfill
                    \vrule width0.3mm height 1.8mm depth-0.3mm}}
\newcommand{\hket}{
\hbox to \linewidth{\vrule width0.3mm height1.5mm
                    \leaders\hrule height0.3mm\hfill
                    \vrule width0.3mm height1.5mm}}


\newlength{\entryindent}
\newlength{\reductionind}
\newlength{\clausetextind}
\newlength{\entrytextind}
\newlength{\labelind}
\setlength{\entryindent}{.05\linewidth}
\setlength{\reductionind}{.28\linewidth}
\setlength{\clausetextind}{.31\linewidth}
\setlength{\entrytextind}{.36\linewidth}
\setlength{\labelind}{.75\linewidth}

\newlength{\entrytext}
\newlength{\clausetext}

\setlength{\entrytext}{\linewidth}
\addtolength{\entrytext}{-\entrytextind}
\setlength{\clausetext}{\linewidth}
\addtolength{\clausetext}{-\clausetextind}

\newlength{\tabone}
\newlength{\tabtwo}
\newlength{\tabthree}
\newlength{\tabfour}
\newlength{\tabfive}

\setlength{\tabone}{\entryindent}
\setlength{\tabtwo}{\reductionind}
\addtolength{\tabtwo}{-\entryindent}
\setlength{\tabthree}{\clausetextind}
\addtolength{\tabthree}{-\reductionind}
\setlength{\tabfour}{\entrytextind}
\addtolength{\tabfour}{-\clausetextind}
\setlength{\tabfive}{\labelind}
\addtolength{\tabfive}{-\entrytextind}

\newcommand{\ratio}{.18}


\newcounter{displayCounter}[section]

\newcommand{\refDisplayHere}[1]{\thesection.\ref{#1}}

\newcommand{\labelDisplay}[1]{\refstepcounter{displayCounter}
                           \label{#1}
                           \addtocounter{displayCounter}{-1}}

\newenvironment{display}[1]{
  \refstepcounter{displayCounter}
  \setlength{\entrytext}{\linewidth}
  \addtolength{\entrytext}{-\ratio\linewidth}
  \addtolength{\entrytext}{-3em}
  \setlength{\clausetext}{\entrytext}
  \addtolength{\clausetext}{1.5em}
\begin{tabbing}
  \hspace{1.5em}\=\hspace{\ratio\linewidth}\=\hspace{1.5em}\= \kill
  {Table \thesection.\thedisplayCounter. \emph{#1}}\\[-.8ex]
  \hbra\\[-.8ex]
  }{\\[-.8ex]\hket
  \end{tabbing}}



\newcommand{\iftimes}[1]{}
\newcommand{\ifshort}[1]{}

\newcommand{\ifnoNP}[1]{}






\newcommand{\ileave}{\cdot}

\newcommand{\sym}[1]{sym(#1)}

\newcommand{\ssem}[1]{\sema{#1}{}}

\newcommand{\Nat}{\mathbb{N}}


\newcommand{\edge}[3]{(#1, #2, #3)}

\newcommand{\fin}[1]{in(#1)}
\newcommand{\fout}[1]{out(#1)}
\newcommand{\econc}[1]{\pi (#1)}

\newcommand{\elem}[1]{#1}
\newcommand{\eleme}{\elem{e}}
\newcommand{\elemei}{\eleme_{i}}
\newcommand{\elemej}{\eleme_{j}}
\newcommand{\xeleme}[1]{\elem{e}_{#1}}
\newcommand{\gelem}[2]{(#1,#2)}
\newcommand{\gin}[1]{#1.in}
\newcommand{\gout}[1]{#1.out}
\newcommand{\geleme}{(\gin{\eleme}, \gout{\eleme})}

\newcommand{\gschema}[1]{\mathcal{#1}}
\newcommand{\schemas}{\gschema{S}}

\newcommand{\rsem}[1]{L(#1)}
\newcommand{\ls}[2]{#1}
\newcommand{\gsem}[2]{\ssem{#1}_{#2}}
\newcommand{\ggsem}[1]{\gsem{#1}{G}}
\newcommand{\rconc}{\circ}
\newcommand{\back}[1]{#1^{-}}

\newcommand{\xcount}[3]{#1^{#2,#3}}
\newcommand{\xset}[2]{\{ #1 \}_{#2}}
\newcommand{\pset}[1]{\mathcal{#1}}
\newcommand{\psete}{\pset{E}}
\newcommand{\psetei}{\psete_{i}}
\newcommand{\judbasic}[3]{\vdash_{#1} {#2} : {#3}}
\newcommand{\first}[1]{first(#1)}

\newcommand{\rpq}{RPQ}
\newcommand{\nre}{NRE}
\newcommand{\gxp}{GXP}
\newcommand{\paths}[1]{Paths(#1)}

\newcommand{\reps}[1]{{#1}^{*}}
\newcommand{\plus}[1]{{#1}^{+}}
\newcommand{\xsat}[1]{SAT(#1)}
\newcommand{\rsat}{\xsat{\rpq}}
\newcommand{\xbra}{]}
\newcommand{\xnorm}[1]{norm(#1)}

\newcommand{\sdnf}{DNF}

\newif{\ifMarginalComments}
 \MarginalCommentstrue   

%
%

\newcommand{\DC}[1]{}

  \newcommand{\CS}[1]{}
\begin{document}

\setlength{\pdfpageheight}{\paperheight}
\setlength{\pdfpagewidth}{\paperwidth}

\conferenceinfo{CONF 'yy}{Month d--d, 20yy, City, ST, Country} 
\copyrightyear{20yy} 
\copyrightdata{978-1-nnnn-nnnn-n/yy/mm} 
\doi{nnnnnnn.nnnnnnn}



\title{Typing Regular Path Query Languages for Data Graphs}


\authorinfo{Dario Colazzo}
           {LAMSADE - Universit\'e Paris-Dauphine}
           {dario.colazzo@dauphine.fr}
\authorinfo{Carlo Sartiani}
           {DIMIE - Universit\`a della Basilicata}
           {sartiani@gmail.com}

\maketitle
%

\begin{abstract}
Regular path query languages for data graphs are essentially \emph{untyped}. The lack of type information greatly limits the optimization opportunities  for query engines and makes application development more complex. In this paper we discuss a simple, yet expressive, schema language for edge-labelled data graphs.  This schema language is, then, used to define a query type inference approach with good precision properties.
\end{abstract}

\category{H.2.1}{Database Management}{Logical Design}

\terms
Theory, Languages

\keywords
RPQs, type inference, data graphs


\section{Introduction}\label{sec:intro}

In the last few years graph databases gained more and more relevance in application areas such as the Semantic Web, social networks, bioinformatics, network traffic analysis, and crime detection. This led to the definition of many query formalisms for graph databases, like, for instance, \emph{regular path queries} (RPQs \cite{DBLP:journals/siamcomp/MendelzonW95}), \emph{nested regular expressions} (NREs \cite{DBLP:journals/ws/PerezAG10}), \emph{conjunctive regular path queries} (CRPQs \cite{DBLP:conf/pods/ConsensM90}), GXPath \cite{DBLP:conf/icdt/LibkinMV13}, and their derivatives. All these languages are based on the idea of specifying regular expressions describing paths in the input graph, and can be considered, to some extent, a generalization of existing path query languages for semistructured data (see XPath \cite{XPath2.0}, for instance). Regular path query languages are often used in other graph query languages, like Cypher \cite{cypher} or PQL \cite{pql}, to specify patterns in variable binding clauses.

Regular path query languages are essentially \emph{untyped}. This means that one cannot statically infer the structure of query results (\emph{type inference}), check if the results satisfy a given schema (\emph{type-checking}), and verify if the query results would always be empty (\emph{query correctness}). Furthermore, the lack of type information greatly limits the optimization opportunities  for query engines and makes application development more complex.

\paragraph*{Our Contribution}\label{par:ourcontrib}

In this paper we describe   a simple, yet expressive, schema language for edge-labelled data graphs. A schema is formed by a collection of \emph{schema elements}, each one describing the set of incoming and outgoing edges of a class of graph nodes; edges are specified through regular expressions. Unlike what happens in other schema languages for graphs \cite{DBLP:conf/sigmod/ShaoWL13} \cite{DBLP:conf/icdt/StaworkoBGHPS15}, that  allow the designer to describe in full detail the structure of outgoing edges as well as the structure of node values, but give her very limited modelling choices for incoming edges, our schema language makes no distinction between incoming and outgoing edges, and gives the designer the same modeling tools for both classes of edges.


This increased expressive power has the drawback that, as we will show in Section \ref{sec:preldefs}, in the general case, schema emptiness checking is undecidable; hence, a few restrictions on regular expressions describing edges are needed in order to ensure that the semantics of graph schemas is well defined, and to make schema emptiness decidable. The resulting class of schemas is named  \emph{well-formed schemas}  and can be  viewed as a generalization of DTDs \cite{DBLP:journals/tods/MartensNSB06} to data graphs. The proposed language is a first step towards the definition and analysis of even more powerful schema languages for data graphs.

In the second part of the paper we leverage on well-formed schemas  to build a type inference system, working in polynomial time,  for RPQs, NREs, and GXPath queries with good soundness and completeness properties; in particular, this type inference system is sound and complete on RPQs,  while completeness has to be relaxed on NREs and GXPath queries. This means that, by using this system, it is possible to decide whether an RPQ is satisfiable on graphs conforming to a given schema in polynomial time.

\paragraph*{Paper Outline}\label{par:papout}

The paper is structured as follows. In Section \ref{sec:preldefs} we first describe the data model and the type language used in our approach; then, we present a schema language for data graphs and discuss the emptiness problem for the resulting schemas. In Section \ref{sec:querylan}, next, we survey regular path query languages and describe their semantics. In Section \ref{sec:rules}, then, we present our type inference systems. In Sections \ref{sec:relworks} and \ref{sec:concl}, we discuss some related works and  draw our conclusions.


\section{Preliminary Definitions}\label{sec:preldefs}

\subsection{Data Model and Type Language}\label{subsec:datamodel}

Following \cite{DBLP:conf/icdt/LibkinMV13}, we model a \emph{data graph} as an edge-labelled graph, as shown below.

\begin{definition}[Data Graph]
Given a finite alphabet $\Sigma$ and a (possibly) infinite value domain $\mathcal{D}$, a data graph $G$ over $\Sigma$ and $\mathcal{D}$ is a triple $G = (V, E, \rho)$, where:

\begin{itemize}
\item $V$ is a finite set of nodes;

\item $E \subseteq V \times \Sigma \times V$ is a set of labelled, directed edges $\edge{v_{i}}{a}{v_{j}}$;

\item $\rho : V \rightarrow \mathcal{D}$ is a mapping from nodes to values. 
\end{itemize}

\end{definition}

Given a node $v$, we  indicate with $\fin{v}$ and $\fout{v}$ the set of incoming and outgoing edges, respectively. Formally: 
{
\begin{itemize}
\item $\fin{v} = \{\edge{v^{\prime}}{a}{v} \in E \mid v^{\prime} \in V \wedge a \in \Sigma \}$;

\item $\fout{v} = \{\edge{v}{a}{v^{\prime}} \in E \mid v^{\prime} \in V \wedge a \in \Sigma \}$.
\end{itemize}}

 We assume that sequences of   outgoing (incoming) edges of a node are unordered, as it is often the case in graph databases.
 Given a set of edges $S_{E} \subseteq E$, we will indicate with $\econc{S_{E}}$ the unordered concatenation of the labels of the edges in $S_{E}$. 

This data model is general enough to capture many practical use graphs, ranging from RDF data to social network graphs, as shown by the following example.

\begin{example}\label{ex:first}
Consider the graph shown in Figure \ref{fig:first}. This graph contains bibliographic information coming from a fragment of the RDF representation of the DBLP repository \cite{DBLPRDF}. As in \cite{DBLP:conf/icdt/LibkinMV13}, we indicate the value of a node inside its graphical representation, and use RDF properties to label edges.
\end{example}
\begin{figure}
{\tiny{
\begin{tikzpicture}[->,>=stealth',shorten >=1pt,auto,node distance=2.5cm,
                    thick,main node/.style={rounded rectangle,draw,font=\sffamily\tiny}]

  \node[main node] (1) at (2,0) {jacm};
  \node[main node] (2) at (4,0) {HopcroftT74};
  \node[main node] (3) at (7,0) {Robert Endre Tarjan};
  \node[main node] (4) at (0,-2) {focs};
  \node[main node] (5) at (2,-2) {FOCS8};
  \node[main node] (6) at (4,-2) {HopcroftU67a};
  \node[main node] (7) at (7,-2) {John E. Hopcroft};
  \node[main node] (8) at (7,-4) {Jeffrey D. Ullman};

  \path[every node/.style={font=\sffamily\small}]
    (2) edge node  {journal} (1)
        edge node  {creator} (3)
        edge node  {creator} (7)
    (5) edge node  {series} (4)
    (6) edge node {partOf} (5)
         edge node {creator} (7)
         edge node {creator} (8);
\end{tikzpicture}}}
\caption{An RDF graph.}
\label{fig:first}
\end{figure}
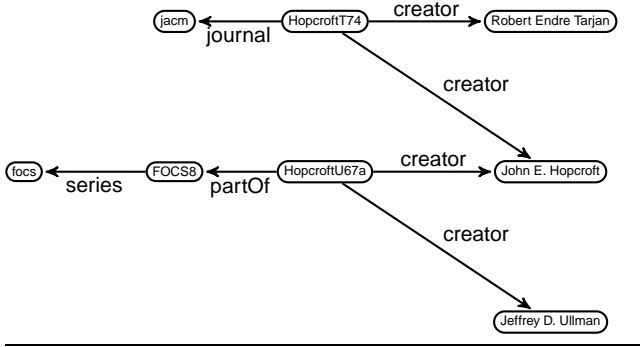

In this work we propose a schema language for data graphs that associates to each schema element a pair of regular expressions describing sequences of labels of the incoming and outgoing edges of each node. Regular expressions obey the following grammar:
\[
\begin{array}{lllll}
T  ::= &    \epsilon 
	  \ \mid\ a 
	  \ \mid\ T + T 
	  \ \mid\ T \ileave T
	  \ \mid\ \reps{T}
\end{array}
\]
where $\epsilon$ denotes the empty sequence, $a$ is a symbol in $\Sigma$, $+$ and  $\ileave$  denote, respectively, union and unordered concatenation, and $*$ is the Kleene star. As expected, unordered concatenation $\ileave$ is commutative, associative and has $\epsilon$ as neutral element. In particular, the expression $T_{1} \ileave \ldots \ileave T_{n}$ is equivalent to all  of its possible permutations. In the following we will also use $\plus{T}$ and $T?$ as abbreviations for $\reps{T} \cdot T$ and $T + \epsilon$.

The semantics of regular expressions  is denoted as $\rsem{-}$, denoting the minimal function  satisfying the following equations: 
\[
\begin{array}{llllllll}
\rsem{\epsilon} &  = & \{\epsilon\} \\ 
\rsem{a} & = & \{a\} \\
\rsem{T_{1} + T_{2}} &= & \rsem{T_{1}} \cup \rsem{T_{2}} \\
\rsem{T_{1} \ileave T_{2}} & = & \rsem{T_{1}}\ileave\rsem{T_{2}}\\ 
\rsem{\reps{T}} & = & \bigcup_{i\geq 0} \rsem{T}^{i} 
\end{array}
\]
where $L_1 \ileave L_2$ denotes  unordered language concatenation and is defined in the obvious way, while  for any $i \in \Nat$, $L^{i}=L \ileave L^{i-1}$ with   $L^{0}=\{\epsilon\}$.

\subsection{Schema Language}\label{subsec:schema}

Regular expressions are the building blocks of our schema language.

%
%
%

\begin{definition}
Given a regular expression $T$ over $\Sigma$, $\sym{T}$ is the set of symbols in $\Sigma$ appearing in $T$.
\end{definition}

\begin{definition}[Graph Schema Element]
Given a finite alphabet $\Sigma$, a schema element $\eleme$ over $\Sigma$ is a pair $\geleme$, where $\gin{\eleme}$ and $\gout{\eleme}$ are regular expressions over $\Sigma$. 
\end{definition}
The semantics of a schema element $\eleme$ is defined as follows. 
\begin{eqnarray*}
\ssem{\eleme} & = & \{v \mid \econc{\fin{v}} \in \rsem{\gin{\eleme}} \wedge \econc{\fout{v}} \in \rsem{\gout{\eleme}} \}
\end{eqnarray*}

\noindent A schema element, then, specifies constraints on the incoming and outgoing edges of a node. Consider, for instance, the following schema element:
$$e = \gelem{\ls{a}{1} \ileave \ls{b}{1} \ileave (\ls{c}{1} + \ls{d}{1})}{\ls{e}{2} \ileave \reps{\ls{h}{1}}}$$


This element describes graph nodes having an incoming $a$-edge, an incoming $b$-edge, as well as an incoming edge labelled with $c$ or $d$; these nodes must also have an outgoing $e$-edge together with zero or more  outgoing $h$-edges. 

In our schema language, hence, we not only impose constraints on outgoing edges, but also on incoming edges. This is in contrast to what happens in schema languages for XML data (e.g., DTDs \cite{XML1.1} and XML Schema \cite{XMLSchema-part1}). This choice is motivated by the observation that in a graph each vertex may have multiple incoming edges and, hence, multiple \emph{fathers}, while in an XML tree each node, except for the root, has a single father. Therefore, it is important to give the schema designer the ability to model the set of incoming edges, so to avoid potentially dangerous situations. Consider, for instance, a data graph describing a bibliographic database, where nodes can represent books, papers, authors, and publishers; of course, while author nodes can have incoming edges labelled with ``writtenBy'', they cannot allow for incoming edges with label ``publishedBy'', which, instead, are allowed for publisher nodes only.  

The use of regular expressions for modeling incoming edges makes our language quite different from existing graph schema languages like TSL \cite{DBLP:conf/sigmod/ShaoWL13} and SheX \cite{DBLP:conf/icdt/StaworkoBGHPS15}. In all these languages, the designer can use regular expressions to specify the sequence of outgoing edges for each node type; each edge is described by a label and by the type of the receiving node. Therefore, in these languages it is not possible to specify, for instance, that a node of a given type can have exactly one incoming edge of a given kind.

\begin{definition}[Graph Schemas]\label{def:grsch}
A graph schema $\schemas$  is  a  finite set of schema elements $\{\elemei\}_{i = 0}^{n}$  such that:
\begin{enumerate}
\item $\forall i \in [0..n]. \forall l \in \sym{\gin{\elemei}}. \exists j \in [0..n]. l \in \sym{\gout{\elemej}}$;

\item $\forall i \in [0..n]. \forall l \in \sym{\gout{\elemei}}. \ \exists j \in [0..n]. l \in \sym{\gin{\elemej}}$;

\item $\forall i,j \in [0..n]: (\gin{\elemei} \cap \gin{\elemej} = \emptyset \vee \gout{\elemei} \cap \gout{\elemej} = \emptyset)$.
\end{enumerate}
\end{definition}

Conditions 1 and 2 above are necessary to ensure that the schema cannot define graphs with dangling edges: any symbol used in an outgoing edge must also be used to label an incoming edge, and vice versa. As we will see later, these conditions are not sufficient to imply non-emptiness. Condition 3 guarantees the uniqueness of node typing: a graph node can be typed by at most one schema element.

Schema semantics is defined as follows. 
\begin{definition}[Graph Schema Semantics]\label{gsem}
A data graph $G = (V, E, \rho)$ over $\Sigma$ and $\mathcal{D}$ is described by a graph schema $\schemas$ ($G \in \ssem{\schemas}$) if and only for each $v \in V$ there exists $\elemei \in \schemas$ such that $v \in \ssem{\elemei}$.
\end{definition}

\begin{example}\label{ex:second}
Consider again the graph of Example \ref{ex:first}. This graph can be typed by the schema $\schemas  =  \{ \xeleme{1}, \xeleme{2}, \xeleme{3}, \xeleme{4}, \xeleme{5} \}$, where:
\[
\begin{array}{lllllllllllllll}
\xeleme{1} & = & \gelem{\epsilon}{(\ls{journal}{1} + \ls{partOf}{1}) \ \ileave \ \plus{(\ls{creator}{1})}}\\
\xeleme{2} & = & \gelem{\reps{journal}}{\epsilon}\\
\xeleme{3} & = & \gelem{\reps{partOf}}{\ls{series}{1}}\\
\xeleme{4} & = & \gelem{\reps{series}}{\epsilon}\\
\xeleme{5} & = & \gelem{\reps{creator}}{\epsilon}
\end{array}
\]
 \end{example}
 
 \subsection{Schema Emptiness}\label{subsec:emptiness}

A graph schema, even though it satisfies all the properties of Definition \ref{def:grsch}, may be empty, and it could be difficult for the user to figure out whether the schema she has defined is empty.  For a simple schema like the following one, emptiness can be easily detected. 

\begin{example}\label{ex:empty}
Consider the graph schema $\schemas = \{\xeleme{1}, \xeleme{2} \}$, where:
\[
\begin{array}{llllllllll}
\xeleme{1} & = & (\epsilon, \ls{a}{1} \ileave \ls{b}{1} \ileave \ls{c}{1} \ileave \ls{c}{1} ) &\\
\xeleme{2} & = & (\ls{a}{1} \ileave \ls{b}{1} \ileave \ls{c}{1}, \epsilon)
\end{array}
\]

This schema satisfies conditions 1-3 of Definition \ref{def:grsch}. However, it is empty as cardinality constraints expressed by regular expressions of incoming and outgoing edges are incompatible. 
\end{example}

For some schemas, checking compatibility between incoming and outgoing edges can be far from being obvious, as happens for the following one.

\begin{example}\label{ex:nonempty}
Consider the graph schema $\schemas = \{\xeleme{1}, \xeleme{2},\xeleme{3} \}$, where:
\[
\begin{array}{llllllllll}
\xeleme{1} & = & (\epsilon, \ls{a}{1} \ileave \ls{b}{1} \ileave \ls{c}{1} \ileave \ls{c}{1}  \ileave \ls{c}{1} \ileave \ls{c}{1}) \\ 
\xeleme{2} & = & (\ls{a}{1} \ileave \ls{b}{1} \ileave \ls{c}{1}, \epsilon)\\
\xeleme{3} & = & (\ls{c}{1} \ileave \ls{c}{1}, \epsilon)
\end{array}
\]

In this schema each $\xeleme{1}$ node produces 4 outgoing c-edges, that are consumed by $\xeleme{2}$ and $\xeleme{3}$ nodes. This schema is not empty, as it possible to build a well-formed graph comprising 2 $\xeleme{1}$ nodes, 2 $\xeleme{2}$ nodes, and 3 $\xeleme{3}$ nodes.
\end{example}

Without imposing restrictions on the class of regular expressions being used, checking the emptiness of a schema is not decidable. To show this undecidability result, it is necessary to establish an equivalence between graph schemas and homogeneous systems of linear diophantine equations with parameters. Indeed, we associate to each schema element a distinct variable, and build, for each symbol, a polynomial equation  describing the produced and consumed edges labelled with that symbol. Each symbol equation contains the variables of the schema elements producing or consuming edges labelled with that symbol; the coefficient of each variable describes the number of produced or consumed edges. The result is an homogeneous system which has a non-zero natural solution if and only if  the schema is not empty. The following example   illustrates this approach. 

\begin{example}\label{ex:sysex}
Consider again the schema of Example \ref{ex:empty}. This empty schema consists of two schema elements ($\xeleme{1}$ and $\xeleme{2}$) to which we can associate variables $x$ and $y$. Regular expressions in the schema use three different  symbols ($\ls{a}{1}$, $\ls{b}{1}$, and $\ls{c}{1}$), so we have to define the following three linear equations:
\[
\begin{array}{lllllllll}
\ls{a}{1}. & x - y = 0 \\
\ls{b}{1}. & x - y = 0 \\
\ls{c}{1}. & 2x - y = 0
\end{array}
\]

In the first equation variable $x$ has coefficient $1$, as $\xeleme{1}$ produces an $a$-edges, while variable $y$ has coefficient $-1$ since $\xeleme{2}$ consumes an $a$-edge. As it can be easily seen, the only solution of this system is (0,0,0).

\medskip

Consider now the schema of Example \ref{ex:nonempty}. As illustrated before, this schema is not empty and comprises three schema elements ($\xeleme{1}$, $\xeleme{2}$, and $\xeleme{3}$) to which we can associate variables $x$, $y$, and $z$. As for the previous example, we have three distinct symbols in the schema, so we can define a system with the following linear equations:
\[
\begin{array}{lllllllll}
\ls{a}{1}. & x - y = 0 \\
\ls{b}{1}. & x - y = 0 \\
\ls{c}{1}. & 4x - y -2z = 0
\end{array}
\]

It easy to see that (2,2,3) is a solution for this system. This means that it is possible to build a graph with 2 $\xeleme{1}$ vertices, 2 $\xeleme{2}$ vertices, and 3 $\xeleme{3}$ vertices.
\end{example}

In the case a schema contains Kleene stars, it is possible to build an equivalent diophantine system by introducing natural parameters, as shown in the following example.

\begin{example}\label{ex:exstar}
Consider the graph schemas $\schemas = \{\xeleme{1}, \xeleme{2},\xeleme{3} \}$, where:
\[
\begin{array}{llllllllll}
\xeleme{1} & = & (\epsilon, \ls{a}{1} \ileave \ls{b}{1} \ileave \reps{(\ls{c}{1} \ileave \ls{c}{1}  \ileave \ls{c}{1} \ileave \ls{c}{1})}) \\ 
\xeleme{2} & = & (\reps{(\ls{a}{1} \ileave \ls{b}{1} \ileave \ls{c}{1})}, \epsilon)\\
\xeleme{3} & = & (\ls{c}{1} \ileave \ls{c}{1}, \epsilon)
\end{array}
\]

To build an equivalent  system we can associate a distinct parameter to each occurrence of the Kleene star; in particular, we associate the parameter $h_{1}$ to the occurrence in $\xeleme{1}$, and a parameter $h_{2}$ to the occurrence in $\xeleme{2}$. The resulting system is the following:
\[
\begin{array}{lllllllll}
\ls{a}{1}. & x - h_{2}y = 0 \\
\ls{b}{1}. & x - h_{2}y = 0 \\
\ls{c}{1}. & 4h_{1}x - h_{2}y -2z = 0
\end{array}
\]

While this system contains equations that are linear in variables $x$, $y$, and $z$, coefficients are no longer constant and can assume any value in $\mathbb{N}$.

\end{example}

In the case of the schemas of  Example \ref{ex:sysex}, it is quite easy to verify if the corresponding system is  consistent and has a non-trivial, positive integer solution. Indeed, as pointed out in \cite{DBLP:conf/mfcs/Domenjoud91}, it suffices to build the \emph{convex hull} of the set of m-dimensional points defined by the columns of the system coefficient matrix and to check if $\vec{0}$ is contained in this polytope.

However, in the case of the schema of Example \ref{ex:exstar}, this approach can no longer be used. Indeed, the coefficient matrix contains parameters that prevent one from computing the convex hull. The problem of the consistency of homogeneous systems of diophantine equations with parameters has been already studied \cite{DBLP:conf/icalp/XieDI03,clauss98}. In \cite{DBLP:conf/icalp/XieDI03} Xie et al. proved that, even if we restrict to linear polynomial of parameters (no nested Kleene stars), there exists a fixed $k > 2$ such that the problem is undecidable if the system contains at least $k$ equations (i.e., the schema uses at least $k$ distinct symbols), and that the problem is decidable for systems of 2 equations; in \cite{clauss98} Clauss showed that the problem is decidable if the system contains a single parameter, two variables, and any number of equations.

These results motivate the need for a restriction on the class of schemas that ensures the non-emptiness of the schema. To develop such a restriction, we propose here an approach based on   several ingredients. The first one consists of restricting the kind of regular expressions that can be used in element types. As seen before, one source of difficulty is the presence of regular expressions with multiple occurrences of a  symbol. Another aspect that complicates the problem is nesting of repetitions: indeed, it is known that the consistency of systems of Diophantine equations is undecidable if equation degree is greater or equal to 4 \cite{opac-b1083621}, and nested Kleene stars in a schema just increase the degree of the equations in the corresponding system. Consider, for instance, the following schema:
\[
\begin{array}{llllllllll}
\xeleme{1} & = & (\epsilon, \reps{(\ls{a}{1} \ileave \reps{(\ls{b}{1} \ileave \reps{(\ls{c}{1} \ileave \ls{c}{1}  \ileave \ls{c}{1} \ileave \ls{c}{1})})})}) \\ 
\xeleme{2} & = & (\reps{(\ls{a}{1} \ileave \ls{b}{1} \ileave \ls{c}{1})}, \epsilon)\\
\xeleme{3} & = & (\ls{c}{1} \ileave \ls{c}{1}, \epsilon)
\end{array}
\]
The corresponding system, which uses four parameters $h_{1}$, $h_{2}$, $h_{3}$, and $h_{4}$,   has degree 4, as shown below:
\[
\begin{array}{lllllllll}
\ls{a}{1}. & h_{1}x - h_{2}y = 0 \\
\ls{b}{1}. & h_{1}h_{3}x - h_{2}y = 0 \\
\ls{c}{1}. & 4h_{1}h_{3}h_{4}x - h_{2}y -2z = 0
\end{array}
\]

Inspired by our previous works \cite{DBLP:conf/dbpl/GhelliCS07, DBLP:journals/is/ColazzoGS09, DBLP:journals/tcs/ColazzoGPS13, DBLP:journals/tods/ColazzoGPS13}, we restrict here to \emph{conflict-free} (CF) regular expressions, that are expressions where  i) any symbol may occur at most once (single-occurrence constraint), and ii) repetition */+ is only allowed over symbols. By using conflict-free expressions only, we can avoid the issues related to the nesting of repetitions as well as those concerning multiple occurrences of the same symbol.

Conflict-free expressions obey the following grammar:
\[
\begin{array}{lllll}
T  ::= &    \epsilon 
	  \ \mid\ a 
	  \ \mid\ T + T 
	  \ \mid\ T \ileave T
	  \ \mid\ \reps{a}
	  \ \mid\ \plus{a}
\end{array}
\]
and satisfy the  single-occurrence constraint: for any $T_{1} + T_{2}$ or $T_{1} \ileave T_{2}$ subexpression of a CF type, $\sym{T_{1}}\cap\sym{T_{2}}=\emptyset$ holds.

The expression $\reps{a} \ileave b + c $ is conflict-free, while the expression used in Example \ref{ex:nonempty} in schema element $\xeleme{1}$ is not, as the  single-occurrence constraint is not respected there; the expression $\reps{(a \ileave b)} \ileave c$ is another example of a non conflict-free expression: single-occurrence is met, but the restriction over repetitions is  not. 

Existing studies have shown that users tend to define CF expressions when creating schemas for XML data \cite{DBLP:conf/webdb/Choi02}. We believe that the same will hold in the context of data graphs as the reasons that lead users to adopt CF expressions  depend on aspects that are orthogonal to the the particular data model at hand: conflict-free expressions, indeed, have a semantics  that is relatively simple to understand by humans, and, at the same time, they allow one to describe and constrain a wide class of sequences that arise in the context of semi-structured data management. 
 
 The following example  shows that, unfortunately, conflict-freedom together with   properties that characterise schemas (Definition \ref{def:grsch}) are not sufficient to ensure non emptiness of graph schemas.

\begin{example}\label{ex:emptyCF}
Consider the simple  graph schema $\schemas = \{\xeleme{1}\}$, where:
\[
\begin{array}{llllllllll}
\xeleme{1} & = & (a + b, a \ileave b )
\end{array}
\]

In this schema each $\xeleme{1}$ node produces both a $b$ and an $a$ outgoing edge. The only nodes that can receive these edges are in turn of type  $\xeleme{1}$. These nodes, however,  can receive either a $b$ or an $a$ edge, and in turn emit other two $a$ and $b$ edges. This implies that no finite graph meets this schema. 
\end{example}

An alternative, and equivalent, formulation of the above schema is the following one, obtained by distributing element types over the union type in $a + b$ expression.
\[
\begin{array}{llllllllll}
\xeleme{1} & = & (a , a \ileave b )\\
\xeleme{2} & = & (b, a \ileave b )

\end{array}
\]

This formulation better highlights that, indeed, there are two kinds of nodes that can be generated by schema $\schemas$:  the first one is for nodes receiving an $a$-edge,  and the second one is for nodes receiving a $b$-edge. Now, since both $a$ and $b$ are emitted by both kinds, we could ensure non-emptiness by modifying the schema  as follows:
\[
\begin{array}{llllllllll}
\xeleme{1} & = & (\reps{a} , a \ileave b )\\
\xeleme{2} & = & (\reps{b}, a \ileave b )
\end{array}
\]

It is easy to verify that this schema is not empty and that infinitely many graphs conform to it. The idea underlying this modification is that, whenever a symbol $a$ is emitted by multiple schema elements in a schema, then each occurrence  of $a$ that appears in a receiving expression occurs under a $*$. This implies that there must exist a schema element whose vertices can accept as many $a$-edges as needed. 

As we will see, the generalisation and formalisation of the above sketched restriction actually ensures non-emptiness. Before switching to the formal treatment, it is worth stressing that this restriction demands that, whenever a symbol $b$ is emitted by multiple nodes $n_{1}, \ldots, n_{k}$ with different types, any  node $n$ receiving at least a $b$-edge is allowed by its type $e_{n}$ to have have multiple incoming $b$-edges, thus allowing $n$ to be  shared by  $n_1, \ldots, n_k$ via multiple $b$-edges.  

Note that, of course, a similar restriction is needed to for received symbols wrt emitted symbols: whenever a symbol $a$ is received  by multiple schema elements in a schema, then each occurrence  of $a$ that appears in a emitting expression occurs under a $*$. This rules out empty schemas like the one including the following node types (note that this schema is obtained from a previous one by simply swapping $in$ and $out$ expression). 
\[
\begin{array}{llllllllll}
\xeleme{1} & = & (a \ileave b  , a )\\
\xeleme{2} & = & ( a \ileave b , b )
\end{array}
\]

We identified this restriction after several other attempts with other restrictions. While proving non emptiness for these restrictions, the main problem we had was that a constructive approach (based on trying to build a graph valid wrt the schema in an incremental way)  failed because, each time a node of a given type was introduced, this could receive (emit) pending edges emitted (received) by other nodes already created in the process, but, at the same time, this new node introduced other constraints (pending outgoing and incoming edges) that existing nodes could  not satisfy. So this, in turn, triggered  the introduction of another node which  re-creates the same situation,  therefore leading to a circular and possibly non-terminating process. 

 We have identified the above depicted restriction in such a way that a terminating constructive approach can be used in the proof of non-emptiness.   While the restriction we adopt may seem artificial, we believe that it does not limit the modelling opportunities for the schema designer and that it can be safely adopted in automatic schema-learning approaches. Importantly, our restriction does not exclude schemas  describing graphs where some nodes can receive at most one edge with a given label, as illustrated by the following example.
 
 \begin{example} Consider graphs for representing social networks where users publish posts, which are commented and/or liked by other users, which in turn can establish friendship relationships with other users.\footnote{This is example is borrowed from the official neo4j web site (blog section).} A well formed schema for this database is  $\schemas = \{\xeleme{user}, \xeleme{post} \}$, where:
\[
\begin{array}{llllllllll}
\xeleme{user} & = & (\reps{friend},\  \reps{friend} \ileave \reps{post} \ileave \reps{commented}) \\[1mm]
\xeleme{post} & = & (post \ileave \reps{commented}  \ileave \reps{liked}, \epsilon)
\end{array}
\]
Of course, a user can publish several posts, while a post is posted by only one user. 
 \end{example}

In order to formalize the above illustrated restriction, we have first to normalize regular expressions. Regular expressions must be transformed in \emph{Disjunctive Normal Form} (\sdnf), and then the whole schema must be normalized (as illustrated before) in order to distribute unions over type definitions. The following example illustrates why normalisation is necessary for proving non-emptiness. 

\begin{example}\label{ex:emptyCF2}
Consider the simple  graph schema $\schemas = \{\xeleme{1}\}$, where:
\[
\begin{array}{llllllllll}
\xeleme{1} & = & (a\ileave b \ileave c, a \ileave ( b + c) )
\end{array}
\]

It can be easily proved (by contradiction) that this schema is empty.  However, in its current formulation this schema satisfies the restriction sketched above, as each symbol occurs once in every incoming/outgoing regular expression and no repetition is used, hence contradicting our previous claim. This is due to the fact that, as in Example \ref{ex:emptyCF}, the current formulation hides that the schema actually defines two kinds of nodes. In order to exhibit this property, the outgoing regular expression must be normalised, thus obtaining the following schema:
\[
\begin{array}{llllllllll}
\xeleme{1} & = & (a\ileave b \ileave c, (a \ileave b) + (a \ileave c) )
\end{array}
\]

Furthermore,  the whole schema must be transformed in order to distribute element type definitions over the union emerged by means of normalisation, thus obtaining:

\[
\begin{array}{llllllllll}
\xeleme{1} & = & (a\ileave b \ileave c, a \ileave b  ) \\
\xeleme{2} & = & (a\ileave b \ileave c,  a \ileave c )
\end{array}
\]
As it can be observed, this schema formulation  does not satisfy our restriction. 
\end{example}

\begin{definition}[Disjunctive normal form]\label{def:dnf}
A regular expression $T$ is in Disjunctive Normal Form ({\sdnf}) if it obeys the following grammar.
\begin{eqnarray*}
T & ::= & C + \ldots + C\\
C & ::= & \epsilon \ \mid \ a \ \mid \ C \ileave C \ \mid \ a^{*} \mid \ \plus{a}
\end{eqnarray*}
\end{definition}

Any regular expression can be transformed in {\sdnf} by using the function defined below, where $C_{i}$ denotes a union-free regular expression, and $\bigcup_{i=1}^{n} T_i$ denotes  $T_1 \ileave \ldots \ileave T_n$.

\begin{definition}[$\xnorm{\cdot}$]\label{def:norm}
\[
\begin{array}{llllllllll}
(1) & \xnorm{\epsilon} & = & \epsilon \\
(2) & \xnorm{a} & = & a \\
(3) & \xnorm{T_{1} + T_{2}} & = & \xnorm{T_{1}} + \xnorm{T_{2}} \\
(4) & \xnorm{T_{1} \ileave T_{2}} & = & \xnorm{\bigcup_{i = 1}^{n} \bigcup_{j = 1}^{m} A_{i} \ileave B_{j}} & \\
      &    &&  \ \ \ \text{where $\xnorm{T_{1}} = \bigcup_{i = 1}^{n} A_{i}$}\\
& & &  \ \ \ \text{and  $\xnorm{T_{2}} = \bigcup_{j = 1}^{m} B_{j}$}\\
(5) & \xnorm{\reps{a}} & = & \reps{a} \\
(6) & \xnorm{\plus{a}} & =  & \plus{a}
\end{array}
\]
\end{definition}
It is easy to prove that $T$ and $\xnorm{T}$ are equivalent for any $T$. To prove that $\xnorm{T}$ actually transforms any regular expression in  disjunctive normal form, we need a preliminary lemma.

\renewcommand{\sdnf}{DNF}
\DC{le prove dei due lemmi che seguono devono essere semplificate dal momento che normalizziamo tipi CF}
\CS{Ho iniziato a semplificarle.}
\begin{lemma}\label{lem:normprel}
Given a CF regular expression $T$, if $T$ contains a single union, then $\xnorm{T}$ is in \sdnf.
\end{lemma}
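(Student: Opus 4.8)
The plan is to proceed by structural induction on $T$, using the conflict-free restriction crucially: since the Kleene star and $\plus{\cdot}$ may only be applied to single symbols, the unique union in $T$ can never sit underneath a repetition, so it is reachable from the root of $T$ through a context built solely from $\ileave$- and $+$-nodes. This is what keeps the shape of $\xnorm{T}$ under control.

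First I would record a preliminary observation that I will reuse in the inductive step: on a \emph{union-free} CF expression $C$ (one derived without the $T + T$ production), $\xnorm{\cdot}$ behaves as the identity and returns a single \sdnf-clause. Indeed clauses $(1)$, $(2)$, $(5)$, $(6)$ return their argument unchanged, and in clause $(4)$ both $\xnorm{T_1}$ and $\xnorm{T_2}$ are single clauses by induction, so the cross-union $\bigcup_{i,j} A_i \ileave B_j$ collapses to the single concatenation $A_1 \ileave B_1$, which is again a clause of the $C$-grammar. Hence $\xnorm{C}$ contains no top-level $+$ and lies in \sdnf.

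Then I turn to the main induction on a CF expression $T$ with exactly one union. The atomic cases $\epsilon$, $a$, $\reps{a}$, $\plus{a}$ carry zero unions and are vacuous. If $T = T_1 + T_2$, then the single union is this top-level one, so both $T_1$ and $T_2$ are union-free; by clause $(3)$ and the preliminary observation $\xnorm{T} = \xnorm{T_1} + \xnorm{T_2}$ is a sum of two clauses, hence in \sdnf. If $T = T_1 \ileave T_2$, then the single union lies in exactly one operand --- say $T_1$, with $T_2$ union-free, since the number of unions in $T$ is the sum of those in $T_1$ and $T_2$. By the induction hypothesis $\xnorm{T_1} = A_1 + \cdots + A_n$ is in \sdnf and by the observation $\xnorm{T_2} = B$ is a single clause. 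Applying clause $(4)$ and then distributing the outer $\xnorm{\cdot}$ with clause $(3)$ yields $\xnorm{T} = \bigcup_{i} \xnorm{A_i \ileave B}$; each $A_i \ileave B$ is a concatenation of two clauses, hence a clause, so $\xnorm{T} = \bigcup_i (A_i \ileave B)$ is a sum of clauses and lies in \sdnf.

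The step I expect to be the most delicate is pinning down the behaviour of $\xnorm{\cdot}$ on the concatenation, because clause $(4)$ re-invokes $\xnorm{\cdot}$ on its own output; I would argue that once every operand is a single clause the recursion bottoms out at the concatenation itself, which is the content of the preliminary observation and is where single-occurrence is used to guarantee that $A_i \ileave B$ is a legal clause (the symbol sets $\sym{T_1}$ and $\sym{T_2}$ being disjoint). Establishing that observation cleanly --- that $\xnorm{\cdot}$ terminates and acts as the identity on union-free inputs --- is the real crux; the rest is bookkeeping on where the single union resides.
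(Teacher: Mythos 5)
Your proof is correct, but it runs on a genuinely different induction than the paper's. The paper inducts on the \emph{level of the parse tree at which the single union sits}: the base case is a union at the root, and the inductive step uses rule (4) to rewrite the subterm $(T_1+T_2)\ileave T_3$ into $(\xnorm{T_1}\ileave\xnorm{T_3})+(\xnorm{T_2}\ileave\xnorm{T_3})$, thereby lifting the union one level, and then appeals to the induction hypothesis for the whole expression with that subterm replaced. Your structural induction avoids the two things that step leaves implicit: (a) that $\xnorm{\cdot}$ acts as the identity (a single clause) on union-free CF inputs --- the paper uses this silently when it asserts $\xnorm{T_1}+\xnorm{T_2}=T_1+T_2$, whereas you state and prove it as your preliminary observation; and (b) a congruence property, namely that rewriting a subterm via rule (4) inside a surrounding context does not change the norm of the whole expression --- you never substitute inside a context, so you need no such property. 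Your observation also makes explicit the definitional wart that rule (4) re-invokes $\xnorm{\cdot}$ on its own output, so on union-free inputs the equation only pins down a value if the recursion is read as bottoming out; the paper's proof relies on exactly the same reading without flagging it. What the paper's route buys is the ``float the union to the top'' picture, which is reused as the inner induction in the proof of Lemma \ref{lem:norm}. One small correction to your final remark: single-occurrence is not what makes $A_i\ileave B$ a legal clause, since the clause grammar admits any $C\ileave C$ regardless of symbol disjointness; what conflict-freedom really contributes to this lemma is that repetitions apply only to symbols, so the unique union can never sit under a star (disjointness only matters if you additionally want the output clauses to be conflict-free, which this lemma does not claim).
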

\begin{proof}
We prove the thesis by induction on the level of the parse tree of $T$ where the union is located. Assume that the parse tree of $T$ contains $n+1$ levels, where  0 is the level of the root.
\begin{description}
\item[Base] Assume that the union is located on the root of the parse tree (level 0). Then $T = T_{1} + T_{2}$, where $T_{1}$ and $T_{2}$ are union-free. Hence, $T$ is already in \sdnf\ and $\xnorm{T} = \xnorm{T_{1}} + \xnorm{T_{2}} = T_1 + T_ 2$ is in \sdnf.

\item[Inductive step] Assume that the thesis is true for any regular expression containing a single union $T_{1} + T_{2}$ at level $i$ and assume  that $T$ contains a single union at level $i + 1$. Then, if we indicate with $T^{\prime}$ the subterm at level $i$ surrounding $T_{1} + T_{2}$, $T^{\prime} = (T_{1} + T_{2}  \ileave T_{3})$, where $T_{1}$, $T_{2}$, and $T_{3}$ are union-free. In this case, by applying rule (4) of Definition \ref{def:norm}, $\xnorm{T^{\prime}} = \xnorm{T^{\prime \prime}}$, where $T^{\prime \prime} = (\xnorm{T_{1}} \ileave \xnorm{T_{3}}) + (\xnorm{T_{2}} \ileave \xnorm{T_{3}})$. In $T^{\prime \prime}$ the union is at level $i$, hence, by induction, $\xnorm{T^{\prime \prime}}$ is in \sdnf, which proves the thesis.
\end{description}
\end{proof}

\begin{lemma}\label{lem:norm}
Given a CF regular expression $T$, $\xnorm{T}$ is in \sdnf.
\end{lemma}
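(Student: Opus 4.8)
The plan is to prove the statement by induction on the structure of the CF expression $T$, generalizing the single-union case of Lemma \ref{lem:normprel} to an arbitrary number of unions. The base cases are $T \in \{\epsilon, a, \reps{a}, \plus{a}\}$: here rules (1), (2), (5) and (6) of Definition \ref{def:norm} give $\xnorm{T} = T$, and each of these is already a single clause $C$ of the \sdnf\ grammar, so $\xnorm{T}$ is trivially in \sdnf. Note that conflict-freedom is what makes this case analysis complete: since repetition is confined to single symbols, no repetition node can hide a union, so $\reps{a}$ and $\plus{a}$ are the only repetition forms to consider and they contain no union to distribute.

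For the union case $T = T_1 + T_2$, conflict-freedom is inherited by the immediate subexpressions, so the inductive hypothesis applies to both: $\xnorm{T_1}$ and $\xnorm{T_2}$ are in \sdnf, i.e. each is a sum of union-free clauses. Rule (3) then gives $\xnorm{T} = \xnorm{T_1} + \xnorm{T_2}$, which is again a sum of clauses and hence in \sdnf.

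The concatenation case $T = T_1 \ileave T_2$ is the heart of the argument. By the inductive hypothesis $\xnorm{T_1} = A_1 + \ldots + A_n$ and $\xnorm{T_2} = B_1 + \ldots + B_m$ are in \sdnf, so every $A_i$ and every $B_j$ is a union-free clause. Rule (4) rewrites $\xnorm{T}$ as $\xnorm{\bigcup_{i,j} A_i \ileave B_j}$. The key observation is that each $A_i \ileave B_j$ is still union-free, and the \sdnf\ grammar is closed under $\ileave$ (the production $C \ileave C$); hence $\bigcup_{i,j} A_i \ileave B_j$ is already a sum of clauses, i.e. in \sdnf. It remains to check that the outer application of $\xnorm$ does not destroy this form: applying $\xnorm$ to an expression already in \sdnf\ returns it unchanged, because rule (3) distributes $\xnorm$ over the top-level sum and $\xnorm$ acts as the identity on union-free clauses. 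This identity fact is a short sub-induction, and it also settles the apparent self-reference of rule (4) when both operands are union-free (the case $n = m = 1$).

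I expect the main obstacle to be exactly this concatenation step, where one must guarantee that distributing two disjunctive normal forms cannot reintroduce a union inside a clause or underneath a repetition. This is precisely where conflict-freedom is essential: because every $\ileave$/$+$ operand has a disjoint symbol set and repetition wraps only single symbols, the clauses $A_i, B_j$ never contain a star or plus over a union, so their pairwise $\ileave$-products remain legal clauses of the \sdnf\ grammar. Lemma \ref{lem:normprel} is the specialization of this argument to a single union, and its ``lift the union one parse-tree level at a time'' technique is exactly what the structural induction above performs uniformly for all unions simultaneously; the general statement follows.
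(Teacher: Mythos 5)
Your proof is correct, but it takes a genuinely different route from the paper's. The paper argues by a double induction: an outer induction on the \emph{number of unions} in $T$, whose base case is exactly Lemma \ref{lem:normprel}, and an inner induction on the parse-tree \emph{level} of the topmost union, which ``lifts'' that union one level at a time via rule (4) of Definition \ref{def:norm}. You instead do a plain structural induction on $T$: the atoms $\epsilon$, $a$, $\reps{a}$, $\plus{a}$ are single clauses, the union case is immediate from rule (3), and all the work sits in the concatenation case, where you distribute the two inductively obtained normal forms and observe that each product $A_i \ileave B_j$ is again a union-free clause. This subsumes Lemma \ref{lem:normprel} rather than invoking it, and it is arguably cleaner: the paper's inner induction is delicate (after the rewrite, the expression $T''$ no longer contains a \emph{single} union, so ``the topmost union is now at level $i$'' requires care to state as an induction hypothesis), whereas your hypothesis applies directly to subexpressions. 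The one point both arguments share --- and which you, unlike the paper, flag explicitly --- is the treatment of rule (4) when $n=m=1$: as literally written, $\xnorm{C_1 \ileave C_2}$ for union-free $C_1,C_2$ reduces to itself, so the defining equation pins down no value; your sub-lemma that $\xnorm{\cdot}$ is the identity on union-free expressions is really a (sensible) repair of the definition rather than a consequence of it, and the paper's own proof of Lemma \ref{lem:normprel} silently makes the same assumption when it writes $\xnorm{T_1} + \xnorm{T_2} = T_1 + T_2$. With that reading of rule (4) fixed, both proofs go through.
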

\begin{proof}
We prove the thesis by induction on the number of unions inside $T$.
\begin{description}
\item[Base] If $T$ contains a single union, the thesis is true by Lemma \ref{lem:normprel}.

\item[Inductive step] We assume that the thesis is true for regular expressions containing $n$ unions. Let $T$ be a regular expression containing $n+1$ unions. We proceed by induction on the level of the topmost union.

If the topmost union is at level 0, then $T = T_{1} + T_{2}$, where $T_{1}$ and $T_{2}$ contain at most $n$ unions; by induction, $\xnorm{T_{1}}$ and $\xnorm{T_{2}}$ are in \sdnf. Therefore, $\xnorm{T} = \xnorm{T_{1}} + \xnorm{T_{2}}$ is in \sdnf.

Assume now that the topmost union $T_{1} + T_{2}$  is at level $i + 1$ and the thesis is true for level $i$. Then, if $T^{\prime}$ is the subterm at level $i$ containing $T_{1} + T_{2}$, $T^{\prime} = (T_{1} + T_{2}) \ileave T_{3}$. $T_{1}$, $T_{2}$, and $T_{3}$ contain at most $n$ unions; hence, by the outer induction, $\xnorm{T_{1}}$, $\xnorm{T_{2}}$, and $\xnorm{T_{3}}$ are in \sdnf. If $T^{\prime} = (T_{1} + T_{2}) \ileave T_{3}$, then, by applying rule (4) of Definition \ref{def:norm}, $\xnorm{T^{\prime}} = \xnorm{T^{\prime \prime}}$, where $T^{\prime \prime} = (\xnorm{T_{1}} \ileave \xnorm{T_{3}}) + (\xnorm{T_{2}} \ileave \xnorm{T_{3}})$; $\xnorm{T^{\prime}}$, hence, lifts the union to level $i$; by inner induction, we have the thesis.

\end{description}
\end{proof}


\newcommand{\xxnorm}[1]{\textit{DNorm}(#1)}
\newcommand{\enorm}[1]{\textit{enorm}(#1)}

\begin{definition}[$\xxnorm{\schemas}$]\label{def:dnorm}
Given a schema $\schemas$, we indicate with $\xxnorm{\schemas}$ its \emph{double normalisation}, i.e.,  the schema obtained from $\schemas$ by first normalising each regular expression in  $\xeleme{i}$'s in $\schemas$, and then by distributing element types over unions of normalised expressions in  $\xeleme{i}$'s. 

Formally,    $\xxnorm{\schemas}$ is the set of element types  $\xeleme{{{i,j}}}^k = (C_i , D_j)$ such that:  there exists  $\xeleme{k}\ = \ (in , out)$ in  $\schemas$ with $\xnorm{in}=C_1 +\ldots +C_n$ and $\xnorm{out}=D_1 +\ldots +D_m$, and  $i\in[1\ldots n]$ and $j\in[1\ldots m]$. 
\end{definition}

We can now introduce the class of \emph{well-formed schemas} corresponding to our restriction.  
\begin{definition}[Well-formed schemas] A schema $\schemas$ is well-formed if  the following holds.

For any symbol $a$, if there exist $\xeleme{1}=(in_1, out_1)$ and  $\xeleme{2}=(in_2, out_2)$ in $\xxnorm{\schemas}$ such that $a$ occurs in both $out_1$ and $out_2$ ($in_1$ and $in_2$) then, for any   $\xeleme{}=(in, out)$ in $\xxnorm{\schemas}$, any occurrence of $a$ in   $in$ ($out$, respectively) must be under a $*$.

\end{definition}

\begin{theorem} Every well-formed schema $S$ is not empty.

\end{theorem}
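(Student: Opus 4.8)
The plan is to exhibit, for an arbitrary well-formed schema $\schemas$, a \emph{finite} data graph having at least one node that conforms to it (this is what non-emptiness means, since the empty graph conforms vacuously). I would first reduce the problem to the double-normalised schema $\xxnorm{\schemas}$. Because $\xnorm{\cdot}$ preserves the language of each regular expression and distributing element types over a union merely splits one element's semantics into a union of the pieces' semantics, one checks directly from Definition~\ref{gsem} that a node is typeable by $\schemas$ iff it is typeable by $\xxnorm{\schemas}$, hence $\ssem{\schemas}=\ssem{\xxnorm{\schemas}}$; so it suffices to build a conforming graph for $\xxnorm{\schemas}$. Working there is convenient because every element type is a pair $(C,D)$ of union-free DNF conflict-free expressions: for each symbol $a$, an occurrence of $a$ is plain ($a$, forcing exactly one edge), $\plus{a}$ (at least one), or $\reps{a}$ (any number), and by conflict-freedom occurs at most once. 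Thus each side fixes, per symbol, a lower bound in $\{0,1\}$ and an upper bound in $\{1,\infty\}$ on the number of incoming/outgoing $a$-edges of a node of that type.

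The combinatorial heart of the argument is a dichotomy that is exactly what well-formedness buys. For a symbol $a$, let $\mathrm{Out}(a)$ and $\mathrm{In}(a)$ be the types of $\xxnorm{\schemas}$ whose $\gout{\cdot}$- (resp. $\gin{\cdot}$-) expression mentions $a$; both are nonempty by the no-dangling conditions~1--2 of Definition~\ref{def:grsch} (which survive normalisation, as the set of symbols used on each side is preserved). I would then prove: if no receiving occurrence and no emitting occurrence of $a$ is starred, then $a$ is emitted by a single type and received by a single type, in both cases as a \emph{plain} $a$. Indeed, a plain occurrence of $a$ in some $\gin{\cdot}$ triggers the contrapositive of well-formedness, forbidding $a$ from two distinct $\gout{\cdot}$-expressions, whence $|\mathrm{Out}(a)|=1$; symmetrically a plain emitting occurrence gives $|\mathrm{In}(a)|=1$. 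Otherwise at least one side carries a starred occurrence of $a$, i.e. a type able to emit or absorb arbitrarily many $a$-edges.

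Next I set up the count-balancing. Give every element type $N$ copies (so $N\cdot|\xxnorm{\schemas}|\ge 1$ nodes in all). For each $a$ the achievable totals of emitted and received $a$-stubs are the integer intervals $[L^o_a,U^o_a]$ and $[L^i_a,U^i_a]$, where the lower endpoints are the sums of the per-node minima over the $N$ copies and the upper endpoint is $\infty$ exactly when some emitter (resp. receiver) of $a$ is starred. Using the dichotomy I would show these intervals always overlap and pick a common target $T_a$: in the all-plain case both intervals equal $\{N\}$, so $T_a=N$ and the sides match automatically; if one side is unbounded its interval reaches $\infty$, and one verifies that a \emph{finite} upper bound forces a unique plain emitter/receiver, so the finite side's lower bound never exceeds the opposite upper bound. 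Distributing $T_a$ over the copies within the per-node bounds assigns to each node $v$ a prescribed $a$-out-degree and $a$-in-degree compatible with its type and with all mandatory minima met.

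The last step, and the step I expect to be the main obstacle, is realising this degree prescription as a genuine graph. Handled one symbol at a time, this is the problem of building a \emph{simple} directed graph — recall $E$ is a \emph{set}, so at most one $a$-edge per ordered pair of nodes — with prescribed out- and in-degrees of equal total. This is where multi-edge-freeness and the size parameter $N$ genuinely interact. The saving observation is that the distribution can be chosen so that every individual $a$-degree is bounded by a constant $K$ depending only on $\xxnorm{\schemas}$: any surplus beyond the mandatory minima is dumped onto the at least $N$ starred nodes guaranteed by the dichotomy, so each absorbs at most $|\xxnorm{\schemas}|$ extra edges, while the number of candidate endpoints grows like $N$. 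Taking $N$ large enough relative to $K$, a simple realisation exists by a Gale--Ryser / round-robin argument. Taking the union over all symbols of these per-symbol graphs yields a finite graph in which every node's incoming and outgoing label multisets match its type, i.e. a nonempty member of $\ssem{\xxnorm{\schemas}}=\ssem{\schemas}$, which proves the theorem.
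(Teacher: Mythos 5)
Your proposal is correct in its essentials, but it proves the theorem by a genuinely different route than the paper. The paper shares your first move --- it also reduces to the double-normalised schema $\xxnorm{\schemas}$ (Definition \ref{def:dnorm}), noting equivalence and preservation of conditions 1--2 of Definition \ref{def:grsch} --- but from there it proceeds by induction on the number of element types, constructing a \emph{minimal} witness with exactly one node per type: each node gets at least one $a$-edge for every symbol $a$ in its expressions, and when the inductive step re-introduces the withheld type, well-formedness guarantees that the resulting dangling edges can be attached to existing nodes (a starred receiving/emitting occurrence can absorb any number of extra edges). You avoid induction altogether: you blow the schema up to $N$ copies of each type, turn conformance into a per-symbol supply/demand problem, prove that the two intervals of achievable $a$-edge totals intersect (this is exactly where well-formedness enters for you, used globally rather than incrementally), and then realise a bounded-degree prescription as a simple labelled graph by a Gale--Ryser argument for $N$ large. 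The paper's route buys a small witness and a proof that mirrors the intuition motivating the restriction, but its inductive step (symbol ``reactivation'' and reconnection of dangling edges) is the least rigorous part of the paper; your route replaces precisely that step with standard degree-sequence combinatorics, at the cost of a much larger witness and the obligation to verify the overlap and bounded-degree claims in detail.

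Two statements in your sketch need repair, though neither undermines the plan. First, $\plus{a}$ is neither ``plain'' nor literally ``starred'', so your dichotomy fails as stated: the schema $\xeleme{1}=(\epsilon,\plus{a})$, $\xeleme{2}=(\plus{a},\epsilon)$ is well-formed and has no starred occurrence, yet neither occurrence of $a$ is plain. The cure is to read ``starred'' as ``unbounded above'' (i.e. $\reps{a}$ or $\plus{a}$); all your interval computations survive, since $\plus{a}$ contributes $[N,\infty)$. Second, a finite upper bound on, say, the emitting side (all emitting occurrences plain) forces, via well-formedness, a unique \emph{receiver}, not a unique emitter: there may be several plain emitters feeding one starred receiver, as in $\xeleme{1}=(\epsilon, a \ileave b)$, $\xeleme{2}=(\epsilon, a \ileave c)$, $\xeleme{3}=(\reps{a} \ileave \plus{b} \ileave \plus{c}, \epsilon)$. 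The overlap conclusion still holds in every case --- the unbounded side's lower bound is then at most the plain side's exact total --- but the deduction must be stated in that direction.
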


\begin{proof} We first observe that $\schemas$ and $\xxnorm{\schemas}$ are equivalent,  that (*) each type in $\xxnorm{\schemas}$ contains only union-free and CF $in/out$ expressions, and  that (**) $\xxnorm{\schemas}$ respects the  properties 1 and 2  of Definition \ref{def:grsch} (these properties can be easily proved). We then prove that $\xxnorm{\schemas}$ is not empty. To this end we prove that any schema $\schemas^{\prime}$ satisfying (*) and (**) has a valid graph $G$ having exactly one node for  each element type in $\schemas^{\prime}$ and  that, for each node of $G$ having type $\eleme = (in, out)$ in $\schemas^{\prime}$, there exists an incoming/outgoing $a$-edge for each $a$ in $in/out$. Clearly this property entails the desired one.  

We proceed by induction on $|\schemas^{\prime}|$, i.e., the number of schema elements in $\schemas^{\prime}$.

For the  base case $|\schemas^{\prime}|=1$ we can build a graph with only one node with an $a$ incoming (outgoing) edge for \emph{each} symbol $a$ in the regular  expressions  of the only type   $\xeleme{1}=(in, out)$ of $\xxnorm{\schemas}$. The fact that this  graph is valid with respect to $\xxnorm{\schemas}$ follows from (*) and (**).

Let us consider now the case that $|S'|=n+1$ with $n>1$.
We pick a type $\xeleme{}=(in, out)$ in $S'$ and build  a schema $\schemas'_{-\xeleme{}}$ by dropping out $\xeleme{}=(in, out)$ from $\schemas^{\prime}$ and by deleting, in expressions of remaining types, every symbol  that  occurs only in  $\xeleme{}=(in, out)$. The schema $S'_{-\xeleme{}}$  still satisfies properties (*) and (**), so by induction we can assume that there exists a graph $G$ conforming to it  and that (***) each type in  $\schemas'_{-\xeleme{}}$ has exactly one corresponding node in $G$ and that each node of $G$ having type $\eleme'= (in', out')$ there exists an incoming/outgoing $a$-edge for each $a$ in the $in'/out'$.

Now, we add a new node $n$ node to $G$ as follows. The new node $n$ will contain an incoming (outgoing) $a$-edge \emph{for each} symbol $a$ in the regular regular expression $in$ ($out$)  of the dropped type   $\xeleme{}=(in, out)$. In addition,  we reactivate erased symbols in types of $\schemas'_{-\xeleme{}}$ and add a corresponding  incoming/outgoing edge in each node of a type $\xeleme{}'=(in', out')$ in $\schemas'_{-\xeleme{}}$ for which the symbol has been reactivated in $in' / out'$. 

\DC{forse ci vuole un esempio per illustrare re-activate}

At this point, it may happen that some of the added edges are dangling. We will show that  we can connect these edges to exiting nodes (including $n$)  thanks to the following facts. Properties   1 and 2  of Definition \ref{def:grsch}, plus (***)  ensure that   for each new dangling  $a$-edge either i) there is  an existing (not dangling)  $a$-edge $g$ connecting two nodes $n_{1}$ and $n_{2}$ of $G$ (this is the case when the added edge is outgoing/incoming for $n$ and the edge symbol is already used in types in $\schemas'_{-\xeleme{}}$), or ii) the  dangling edge has a  reactivated label $a$ (a label used in $e$ but not in $\schemas'_{-\xeleme{}}$) and actually multiple of such pending $a$-edges can exist. 


For the first case i) we can distinguish two sub-cases. 

The first one  is that the dangling $a$-edge is outgoing from $n$. Recall that  types corresponding to $n$, $n_1$ and $n_2$ are different types $\xeleme{}, \xeleme{1}, \xeleme{2}$. Wlog, assume that the existing  $a$-edge is from $n_1$ to $n_2$.  Before proceeding, observe that at this point we have: $\xeleme{1}=(in_1, out_1)$ and $a$ occurs in $out_1$, $\xeleme{}=(in, out)$  and $a$ occurs in $out$, and  $\xeleme{2}=(in_2, out_2)$ and $a$ occurs in $e_{2}$. Since $S$ is well-formed, this means that  $a$ occurs in $in_2$ under a $*$, and this implies that the dangling edge can be connected to $n_{2}$.

The second case is that the dangling edge is an incoming  edge of  $n$. This case can be proved as above.

\medskip
Concerning the case ii) we can distinguish the following two sub-cases. 

The first one deals with one or more pending  $a$-edges that are  outgoing from a node of $G$ or from  $n$. Recall that  $a$  is in $\xeleme{}$  but in no other type of $\schemas'_{-\xeleme{}}$. 

Now, we observe that  these edges originates from nodes of different types of $\schemas'$, thanks to (***) and to the fact that  $n$ has exactly one edge for each symbol of $e$; recall that $a$ has been reactivated. Thanks to  (**)  we have that there must be a type $\eleme'=(in', \ out') $ in $\schemas'$ with $a\in \sym{in'}$. In the case that only one pending $a$-edge exists, the case is proved  since the edge can be connected to the node having type $\eleme'$ (observe that it may be the case that a pending \emph{incoming} $a$-edge has been added for this node; in this case the two edges are simply merged). In the remaining case is we have more than one pending $a$-edge; in this case thanks to (**), (***) and to well-formedness of $\schemas$, we have that  there must be a type $\eleme'=(in', \ out') $ in $\schemas'$ with $a\in \sym{in'}$ such that $a$ is under a $*$. So the node having type $e'$ is able to receive all of the pending outgoing $a$-edges. 

The second ii) sub-case deals with one or more pending incoming  $a$-edges  targeting either a node of $G$ or  $n$.  This case is similar to the previous one. \DC{force questo caso lo devo spiegare}
\end{proof}


\section{RPQs, NREs and GXPath}\label{sec:querylan}

RPQs, NREs, and GXPath are graph query languages based on the idea of using regular expressions to specify patterns that must be matched by paths in the input graph. Given a query $q$, the result of its evaluation over a graph $G$ is always a set of node pairs $(v, v^{\prime})$ such that $v$ and $v^{\prime}$ are connected by a path $p$ in $G$ matching the query $q$.

These languages mainly differ in the class of supported regular expressions, ranging from standard regular expressions to expressions with counters and nested predicates.

\DC{che facciamo diamo un po' di referenze e una discussione per sottolineare l'importanza di RPQ nella letteratura e in pratica ? Potrebbe rinforzare il peso dei nostri risultati.}

\medskip

\emph{Regular Path Queries} (RPQs) are the most basic language we are analyzing here. Given a finite alphabet $\Sigma$, an RPQ $r$ over $\Sigma$ is defined by the following grammar:
\[
\begin{array}{lllllll}
r & ::= & \epsilon \mid a \mid r + r \mid r \cdot r \mid r^{*}
\end{array}
\]

Given a graph $G = (V, E, \rho)$, the semantics of RPQs can be defined as follows.
\begin{eqnarray*}
\ggsem{\epsilon} & = & \{ (u,u) \mid u \in V \} \\
\ggsem{a} & = & \{ (u,v) \mid (u,a,v) \in E \} \\
\ggsem{r_{1} + r_{2}} & = & \ggsem{r_{1}} \cup \ggsem{r_{2}} \\
\ggsem{r_{1} \cdot r_{2}} & = & \ggsem{r_{1}} \rconc \ggsem{r_{2}} \\
\ggsem{r^{*}} & = & \cup_{i \geqslant 0} \ggsem{r}^{i}
\end{eqnarray*}

where $\rconc$ is the symbol for the concatenation of binary relations and $R^{i}$ denotes the concatenation of $R$ with itself $i$ times.

\DC{$q$ o $r$ per le query? Ho usato $q$ in diverse prove...}

\begin{example}
Consider the schema of Example \ref{ex:second} and the following query:
$$partOf \cdot series$$

This query selects all the conference papers and relates them to the corresponding conference series. In the case of the graph of Example \ref{ex:first}, the result is the following:\footnote{We use here the value of a node to denote the node itself.}
$$\{(HopcroftU67a, focs) \}$$
\end{example}

As it can be seen from the example, RPQs  can express neither branching nor backward navigation, which are introduced by NREs queries.

\medskip

\emph{Nested Regular Expressions} (NREs) are an evolution of RPQs and form the basis of the path language of SparQL \cite{Sparql}. NREs introduce the ability of traversing edges backwards, as in 2RPQs \cite{DBLP:conf/pods/CalvaneseGLV00}, as well as the ability of specifying conditions inside paths.

NREs obey the following grammar:
\[
\begin{array}{lllllll}
n & ::= & \epsilon \mid a \mid \back{a} \mid n + n \mid n \cdot n \mid n^{*} \mid [n]
\end{array}
\]where $\back{a}$ denotes a backward navigation and $[n]$ allows one to express conditions inside a path expression. Given a graph $G = (V, E, \rho)$, the semantics of NREs can be defined as follows.
\begin{eqnarray*}
\ggsem{\epsilon} & = & \{ (u,u) \mid u \in V \} \\
\ggsem{a} & = & \{ (u,v) \mid (u,a,v) \in E \} \\
\ggsem{\back{a}} & = & \{ (u,v) \mid (v,a,u) \in E \} \\
\ggsem{n_{1} + n_{2}} & = & \ggsem{n_{1}} \cup \ggsem{n_{2}} \\
\ggsem{n_{1} \cdot n_{2}} & = & \ggsem{n_{1}} \rconc \ggsem{n_{2}} \\
\ggsem{n^{*}} & = & \cup_{i \geqslant 0} \ggsem{n}^{i}\\
\ggsem{[n]} & = & \{ (u,u) \mid (u,v) \in \ggsem{n} \}
\end{eqnarray*}

\begin{example}\label{ex:nre}
Consider again the graph of Example \ref{ex:first} and the schema of Example \ref{ex:second}. The following query returns all pairs $(x,y)$ where $x$ is the author of a paper in a conference series $y$, but also published a paper in a journal $z$:
$$[\back{creator} \cdot journal] \cdot \back{creator} \cdot partOf \cdot series$$

The result of this query is $\{(John E. Hopcroft, focs)\}$. Observe that this query cannot be expressed through RPQs or 2RPQs.
\end{example}

GXPath is the most powerful language we are examining here and has been recently proposed by Libkin et al. in \cite{DBLP:conf/icdt/LibkinMV13}. GXPath is essentially an adaptation of XPath to data graphs. Wrt the previous languages, GXPath introduces the \emph{complement} operator, data tests on the values stored into nodes, as well as counters, which generalize the Kleene star.

Among the various fragments of GXPath, we focus here on the navigational, path-positive fragment with intersection, described by the following grammar.
\[
\begin{array}{lllllll}
\alpha & ::= & \epsilon \mid \_ \mid a \mid \back{a} \mid \alpha + \alpha \mid \alpha \cdot \alpha  \mid \xcount{\alpha}{m}{n} \mid \alpha \cap \alpha \mid [\alpha]
\end{array}
\]

Given a graph $G = (V, E, \rho)$, the semantics of GXPath can be defined as follows.
\begin{eqnarray*}
\ggsem{\epsilon} & = & \{ (u,u) \mid u \in V \} \\
\ggsem{\_} & = & \{(u,v) \mid \exists a \in \Sigma . (u,a,v) \in E \}\\
\ggsem{a} & = & \{ (u,v) \mid (u,a,v) \in E \} \\
\ggsem{\back{a}} & = & \{ (u,v) \mid (v,a,u) \in E \} \\
\ggsem{\alpha_{1} + \alpha_{2}} & = & \ggsem{\alpha_{1}} \cup \ggsem{\alpha_{2}} \\
\ggsem{\alpha_{1} \cdot \alpha_{2}} & = & \ggsem{\alpha_{1}} \rconc \ggsem{\alpha_{2}} \\
\ggsem{\xcount{\alpha}{m}{n}} & = & \cup_{i = m}^{n} \ggsem{\alpha}^{i}\\
\ggsem{\alpha_{1} \cap \alpha_{2}} & = & \ggsem{\alpha_{1}} \cap \ggsem{\alpha_{2}}\\
\ggsem{[\alpha]} & = & \{ (u,u) \mid (u,v) \in \ggsem{\alpha} \}
\end{eqnarray*}

\begin{example}\label{ex:qcycle}
Consider the graph depicted in Figure \ref{fig:secgraph}.

\begin{figure}
\centering
\begin{tikzpicture}[->,>=stealth',shorten >=1pt,auto,node distance=1cm,
                    thick,main node/.style={rounded rectangle,draw,font=\sffamily\tiny}]

  \node[main node] (1) at (1,0) {1};
  \node[main node] (2) at (0,-1) {3};
  \node[main node] (3) at (2,-1) {7};
  \node[main node] (4) at (1,-2) {1};
  \node[main node] (5) at (3,-2) {5};
  \node[main node] (6) at (0,-3) {2};
  \node[main node] (7) at (2,-3) {3};

  \path[every node/.style={font=\sffamily\small}]
    (1) edge[swap] node {a} (2)
    (2) edge[swap] node {a} (4)
    (3) edge[swap] node {a} (1)
         edge node {d} (5)
    (4) edge[swap] node {a} (3)
         edge[swap] node  {b} (6)
         edge node  {c} (7);
\end{tikzpicture}
\caption{A graph.}
\label{fig:secgraph}
\end{figure}

Consider now the following query: $[\_ \cdot (\reps{\_}) \cap \epsilon] \cdot (b + c)$. This query selects all pairs of nodes $(x,y)$ where $x$ is part of a cycle and $y$ is reachable from $x$ through  an edge labelled with $b$ or $c$. 
\end{example}

In the following we will indicate with \rpq , \nre , and \gxp\ the three classes of regular expressions we are studying here.



\section{Inference Rules}\label{sec:rules}

In this section we present a  type inference approach for typing RPQs, NREs, and GXPath queries. 
The approach we propose here is a basic  yet useful one. It associates to each query a set of schema element pairs; hence, a query $q$ is typed by a set $\xset{(\elemei, \elemei^{\prime})}{i}$, where $\elemei$ and $\elemei^{\prime}$ are schema elements describing the nodes at the beginning and at the end of path $p$ matching $q$.  
Another advantage of this typing approach is that can be performed in polynomial time (Theorem \ref{theo:bcomplx}) and that it is sound and complete for RPQs (Theorems \ref{theo:basicsound} and \ref{theo:rpqcompl}). For  NRE and GXPath queries only soundness holds (we will provide counterexamples for completeness).

Typing rules rely on the judgement defined below. We use the meta-variables $\psete$ and $\psetei$ to denote sets of schema element pairs.

\begin{definition}[Basic Judgment]
$\judbasic{\schemas}{q}{\psete}$ is  a judgment stating that, given a well-formed  $\schemas$ and a  graph $G \in \ssem{\schemas}$, $\psete$ is an upper bound for $\ggsem{q}$.
\end{definition}

Type inference rules for RPQs, NREs, and GXPath queries are shown in Tables \refDisplayHere{table:basicrpqrules}, \refDisplayHere{table:basicnrerules}, and \refDisplayHere{table:basicgxprules}. In these rules, $\rconc$ is the operator for the usual combination of binary relations, and $\first{\psete} = \{ \elemei \mid \exists \elemej . (\elemei, \elemej) \in \psete \}$. In Table \refDisplayHere{table:basicrpqrules} rule (\textsc{TypeEpsilon}) types $\epsilon$ queries, rule (\textsc{TypeLabel}) deals with forward navigation, while rules (\textsc{TypeUnion}) and (\textsc{TypeConc}) type queries with union and concatenation, respectively. Rule (\textsc{TypeStar}), finally, deals with $r^{*}$ queries.

In Table \refDisplayHere{table:basicnrerules}, rules (\textsc{TypeBackLabel}) and (\textsc{TypeCond}) infer a type for  queries with backward navigation and nested conditions, respectively.

In Table \refDisplayHere{table:basicgxprules}, finally, rules (\textsc{TypeAnyLabel}), (\textsc{TypeCount}), and (\textsc{TypeIntersect}) deal with, respectively, wildcard queries, counting, and intersection.

{\footnotesize
\begin{figure}[htb]
{\labelDisplay{table:basicrpqrules}
\begin{display}{Basic  inference rules for RPQs.}
a\=aaaaaaaaaaaaaaaaaaaaaaaaaaaaaaa\iftimes{aaaaaa}
aa \= aaaaaaaaaaaaaaaaaaaaaaaaaaaaaaaaaaaaaaaaaaaa\=aaaaaaaaaaaaaaaaaaaaaaaaaaaaaaaaaaaa
\kill \> \staterule{(\textsc{TypeEpsilon})}
  {\elemei \in \schemas}
  {\judbasic{\schemas}{\epsilon}{\xset{(\elemei, \elemei)}{i}}}
\\[0.2cm]
\> \staterule{(\textsc{TypeLabel})}
  {\elemei \in \schemas \ \ \elemei^{\prime} \in \schemas \ \ \ls{a}{k} \in \sym{\gout{\elemei}} \cap \sym{\gin{\elemei^{\prime}}}}
  {\judbasic{\schemas}{a}{\xset{(\elemei, \elemei^{\prime})}{i}}}
\\[0.2cm]
\>  \staterule{(\textsc{TypeUnion})}
  {\judbasic{\schemas}{r_{1}}{\psete_{1}} \ \ \judbasic{\schemas}{r_{2}}{\psete_{2}}}
  {\judbasic{\schemas}{r_{1} + r_{2}}{\psete_{1} \cup \psete_{2}}}

\> \staterule{(\textsc{TypeConc})}
  {\judbasic{\schemas}{r_{1}}{\psete_{1}} \ \ \judbasic{\schemas}{r_{2}}{\psete_{2}}}
  {\judbasic{\schemas}{r_{1} \cdot r_{2}}{\psete_{1} \rconc \psete_{2}}}

\\[0.2cm]
\> \staterule{(\textsc{TypeStar})}
  {\judbasic{\schemas}{r}{\psete}}
  {\judbasic{\schemas}{r^{*}}{\bigcup_{i \geqslant 0} \psete^{i}}}
\end{display}}
\end{figure}
}

{\footnotesize
\begin{figure}[htb]
{\labelDisplay{table:basicnrerules}
\begin{display}{Additional basic  inference rules for NREs.}
a\=aaaaaaaaaaaaaaaaaaaaaaaaaaaaaaaaaaa\iftimes{aaaaaa}
aa \= aaaaaaaaaaaaaaaaaaaaaaaaaaaaaaaaaaaaaaaaaaaa\=aaaaaaaaaaaaaaaaaaaaaaaaaaaaaaaaaaaa
\kill \> \staterule{(\textsc{TypeBackLabel})}
  {\elemei \in \schemas \ \ \elemei^{\prime} \in \schemas \ \ \ls{a}{k} \in \sym{\gin{\elemei}} \cap \sym{\gout{\elemei^{\prime}}}}
  {\judbasic{\schemas}{\back{a}}{\xset{(\elemei, \elemei^{\prime})}{i}}}
\\[0.2cm]
\>  \staterule{(\textsc{TypeCond})}
  {\judbasic{\schemas}{n}{\psete_{1}}}
  {\judbasic{\schemas}{[n]}{\first{\psete_{1}} \times \first{\psete_{1}}}}
\end{display}}
\end{figure}
}

{\footnotesize
\begin{figure}[htb]
{\labelDisplay{table:basicgxprules}
\begin{display}{Additional basic  inference rules for GXPath queries.}
a\=aaaaaaaaaaaaaaaaaaaaaaaaaaaaaaaaaaa\iftimes{aaaaaa}
aa \= aaaaaaaaaaaaaaaaaaaaaaaaaaaaaaaaaaaaaaaaaaaa\=aaaaaaaaaaaaaaaaaaaaaaaaaaaaaaaaaaaa
\kill \> \staterule{(\textsc{TypeAnyLabel})}
  {\elemei \in \schemas \ \ \elemei^{\prime} \in \schemas \ \ \sym{\gout{\elemei}} \cap \sym{\gin{\elemei^{\prime}}} \neq \emptyset}
  {\judbasic{\schemas}{\_}{\xset{(\elemei, \elemei^{\prime})}{i}}}
\\[0.2cm]

\> \staterule{(\textsc{TypeCount})}
  {\judbasic{\schemas}{\alpha}{\psete}}
  {\judbasic{\schemas}{\xcount{\alpha}{m}{n}}{\bigcup_{i = m}^{n} \psete^{i}}}

\\[0.2cm]
\> \staterule{(\textsc{TypeIntersect})}
  {\judbasic{\schemas}{\alpha_{1}}{\psete_{1}} \ \ \judbasic{\schemas}{\alpha_{2}}{\psete_{2}}}
  {\judbasic{\schemas}{\alpha_{1} \cap \alpha_{2}}{\psete_{1} \cap \psete_{2}}}
\end{display}}
\end{figure}
}

\begin{example}\label{ex:basicinf}
Consider again the query of Example \ref{ex:nre}. To type this query, rules (\textsc{TypeConc}) and (\textsc{TypeCond}) are first invoked; rule (\textsc{TypeCond}), in turn, invokes rules (\textsc{TypeConc}), (\textsc{TypeLabel}), and (\textsc{TypeBackLabel}) to type $\back{creator} \cdot journal$.  Rule (\textsc{TypeBackLabel}) returns the set $\{(\xeleme{5}, \xeleme{1})\}$, while rule (\textsc{TypeLabel}) returns $\{ (\xeleme{1}, \xeleme{2}) \}$. Rule (\textsc{TypeConc}), hence, returns $\{(\xeleme{5},\xeleme{2})\}$, while rule (\textsc{TypeCond}) returns $\{(\xeleme{5},\xeleme{5})\}$.

Rules (\textsc{TypeConc}), (\textsc{TypeLabel}), and (\textsc{TypeBackLabel}) are called again to type $\back{creator} \cdot partOf \cdot series$, returning $\{(\xeleme{5},\xeleme{1})\} \rconc \{(\xeleme{1},\xeleme{3}) \} \rconc \{(\xeleme{3},\xeleme{4}) \} = \{(\xeleme{5}, \xeleme{4})\}$.

Therefore, the result of the type inference is $\{(\xeleme{5}, \xeleme{5})\} \rconc \{(\xeleme{5}, \xeleme{4})\} = \{(\xeleme{5}, \xeleme{4})\}$, as expected.
\end{example}

The soundness of basic type inference is stated by the following theorem.

\begin{theorem}\label{theo:basicsound}
Given a well-formed  $\schemas$ and a query $q$ on a graph $G = (V, E , \rho) \in \ssem{\schemas}$, if $\judbasic{\schemas}{q}{\psete}$, then, for each $(u,v) \in \ggsem{q}$ there exists $(\xeleme{i}, \xeleme{j}) \in \psete$ such that $u \in \ssem{\xeleme{i}}$ and $v \in \ssem{\xeleme{j}}$.
\end{theorem}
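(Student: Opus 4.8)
The plan is to prove the statement by induction on the derivation of $\judbasic{\schemas}{q}{\psete}$, equivalently by structural induction on $q$, handling one inference rule per case and thus covering the rules of all three query classes at once. The standing fact I would use throughout is that $G \in \ssem{\schemas}$ gives every node a \emph{unique} type: Definition \ref{gsem} provides, for each $u \in V$, \emph{at least} one $\xeleme{i} \in \schemas$ with $u \in \ssem{\xeleme{i}}$, while condition 3 of Definition \ref{def:grsch} provides \emph{at most} one. I expect this uniqueness to be the real workhorse of the argument.

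For the atomic rules (\textsc{TypeEpsilon}, \textsc{TypeLabel}, \textsc{TypeBackLabel}, \textsc{TypeAnyLabel}) the key observation is purely about edge labels: if a node $u$ has an outgoing $a$-edge then, writing $\xeleme{i}$ for the (unique) type of $u$, we have $\econc{\fout{u}} \in \rsem{\gout{\xeleme{i}}}$ and this word mentions the letter $a$; since any word of $\rsem{T}$ uses only symbols occurring in $T$, it follows that $a \in \sym{\gout{\xeleme{i}}}$ (and symmetrically for incoming edges). For \textsc{TypeLabel}, given $(u,v) \in \ggsem{a}$, i.e.\ $\edge{u}{a}{v} \in E$, I would take the unique types $\xeleme{i}, \xeleme{j}$ of $u, v$; the observation yields $a \in \sym{\gout{\xeleme{i}}}$ and $a \in \sym{\gin{\xeleme{j}}}$, so the side condition holds and $(\xeleme{i}, \xeleme{j})$ lies in the inferred set. \textsc{TypeBackLabel} and \textsc{TypeAnyLabel} are the same argument with the roles of the $in/out$ expressions swapped, respectively with an existentially chosen label; \textsc{TypeEpsilon} simply uses the unique type of $u$ to witness $(u,u)$.

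For the inductive rules I would appeal to the induction hypotheses on the premises. \textsc{TypeUnion} is immediate from $\ggsem{r_{1} + r_{2}} = \ggsem{r_{1}} \cup \ggsem{r_{2}}$, and \textsc{TypeCond} follows by taking, for $(u,u) \in \ggsem{[n]}$, a witness $v$ with $(u,v) \in \ggsem{n}$, applying the hypothesis to obtain $(\xeleme{i}, \xeleme{j}) \in \psete_{1}$ with $u \in \ssem{\xeleme{i}}$, and observing $(\xeleme{i}, \xeleme{i}) \in \first{\psete_{1}} \times \first{\psete_{1}}$. The cases I consider the crux are the composition-style rules \textsc{TypeConc}, \textsc{TypeStar}, \textsc{TypeCount} together with \textsc{TypeIntersect}. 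For \textsc{TypeConc}, given $(u,v) \in \ggsem{r_{1}} \rconc \ggsem{r_{2}}$, I pick an intermediate node $w$ with $(u,w) \in \ggsem{r_{1}}$ and $(w,v) \in \ggsem{r_{2}}$; the hypotheses give $(\xeleme{i}, \xeleme{k}) \in \psete_{1}$ and $(\xeleme{k'}, \xeleme{j}) \in \psete_{2}$ with $w \in \ssem{\xeleme{k}}$ and $w \in \ssem{\xeleme{k'}}$. Uniqueness of typing forces $\xeleme{k} = \xeleme{k'}$, hence $(\xeleme{i}, \xeleme{j}) \in \psete_{1} \rconc \psete_{2}$. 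The same merging of the intermediate type drives \textsc{TypeStar} and \textsc{TypeCount} through an inner induction on the exponent $i$ of $\ggsem{r}^{i}$, and it drives \textsc{TypeIntersect}, where the two witnessing pairs obtained from $\alpha_{1}$ and $\alpha_{2}$ must coincide in both coordinates by uniqueness of typing and therefore lie in $\psete_{1} \cap \psete_{2}$.

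The main obstacle is precisely this reliance on condition 3 of Definition \ref{def:grsch}: without the uniqueness of node typing, the two segments of a composition (or the two conjuncts of an intersection) could attach to \emph{different} schema elements at the shared node, and then the relational composition $\rconc$ (resp.\ the set intersection) would fail to retain a witnessing pair. A secondary care point is the base exponent $i = 0$ in \textsc{TypeStar} and \textsc{TypeCount}, where $\ggsem{r}^{0}$ is the identity relation on $V$ and one must verify, under the intended convention for $\psete^{0}$, that it contains $(\xeleme{i}, \xeleme{i})$ for the unique type of each node, thereby reducing to the \textsc{TypeEpsilon} base case.
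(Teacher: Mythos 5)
Your proposal is correct and follows essentially the same route as the paper's own proof: structural induction on the query, with uniqueness of node typing (condition 3 of Definition \ref{def:grsch}) doing the work in the \textsc{TypeConc} and \textsc{TypeIntersect} cases, and the star/counter cases reduced to the union and concatenation arguments. Your treatment is in fact slightly more explicit than the paper's in two spots it glosses over --- the justification that an incident $a$-edge forces $a \in \sym{\gout{\xeleme{i}}}$ (the paper just says ``by definition of well-formed schemas'') and the $i=0$ convention for $\psete^{0}$ in \textsc{TypeStar}/\textsc{TypeCount}.
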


\begin{proof}
By structural induction on the queries.

\begin{description}
\item[($q = \epsilon$)]  Trivial.

\item[($q = a$)]  Let $(u,v) \in \ggsem{a}$. By definition of query semantics, there exists $(u,a,v) \in E$. By definition of well-formed schemas, there exists $\xeleme{i}, \xeleme{j} \in \schemas$ such that $u \in \ssem{\xeleme{i}}$, $v \in \xeleme{j}$, and $\ls{a}{k} \in \sym{\gout{\xeleme{i}}} \cap \sym{\gin{\xeleme{j}}}$ for some $k$. By rule (\textsc{TypeLabel)}, $(\xeleme{i}, \xeleme{j}) \in \psete$.

\item[($q = \back{a}$)]   Let $(u,v) \in \ggsem{\back{a}}$. By definition of query semantics, there exists $(v,a,u) \in E$. By definition of well-formed schemas, there exists $\xeleme{i}, \xeleme{j} \in \schemas$ such that $u \in \ssem{\xeleme{i}}$, $v \in \xeleme{j}$, and $\ls{a}{k} \in \sym{\gin{\xeleme{i}}} \cap \sym{\gout{\xeleme{j}}}$ for some $k$. By rule (\textsc{TypeBackLabel)}, $(\xeleme{i}, \xeleme{j}) \in \psete$.

\item[($q = \_$)]  Let $(u,v) \in \ggsem{\_}$. By definition of query semantics, there exists $a \in V$ such that  $(u,a,v) \in E$. By definition of well-formed schemas, there exists $\xeleme{i}, \xeleme{j} \in \schemas$ such that $u \in \ssem{\xeleme{i}}$, $v \in \xeleme{j}$, and $\ls{a}{k} \in \sym{\gout{\xeleme{i}}} \cap \sym{\gin{\xeleme{j}}}$ for some $k$. By rule (\textsc{TypeAnyLabel)}, $(\xeleme{i}, \xeleme{j}) \in \psete$.

\item[($q = r_{1} + r_{2}$)]  Let $(u,v) \in \ggsem{r_{1} + r_{2}}$. By definition of query semantics, $(u,v) \in \ggsem{r_{1}}$ or $(u,v) \in \ggsem{r_{2}}$. Wlog we can assume that $(u,v) \in \ggsem{r_{1}}$ (the case for $(u,v) \in \ggsem{r_{2}}$ is symmetrical). By induction, there exist $\xeleme{i}, \xeleme{j} \in \schemas$ such that $u \in \ssem{\xeleme{i}}$, $v \in \ssem{\xeleme{j}}$, and $(u,v) \in \psete_{1}$, where $\judbasic{\schemas}{r_{1}}{\psete_{1}}$. By rule (\textsc{TypeUnion}), $(\xeleme{i}, \xeleme{j}) \in \psete$.

\item[($q = r_{1} \cdot r_{2}$)] Let $(u,v) \in \ggsem{r_{1} \cdot r_{2}}$. By definition of query semantics, there exists $w \in V$ such that $(u,w) \in \ggsem{r_{1}}$ and $(w,v) \in \ggsem{r_{2}}$. Let $\judbasic{\schemas}{r_{1}}{\psete_{1}}$ and $\judbasic{\schemas}{r_{2}}{\psete_{2}}$. By induction and by uniqueness of vertex typing, there exist $\xeleme{i}, \xeleme{j}, \xeleme{k} \in \schemas$ such that $u \in \ssem{\xeleme{i}}$, $w \in \ssem{\xeleme{j}}$, $v \in \ssem{\xeleme{k}}$, $(\xeleme{i}, \xeleme{j}) \in \psete_{1}$, and $(\xeleme{j}, \xeleme{k}) \in \psete_{2}$. By rule (\textsc{TypeConc}), $(\xeleme{i}, \xeleme{j}) \rconc (\xeleme{j}, \xeleme{k}) = (\xeleme{i}, \xeleme{k}) \in \psete$.

\item[($q = r^{*}$)]  The thesis follows from the soundness of typing of union and concatenation.

\item[($q = [n\xbra$)]  Let $(u,u) \in \ggsem{[n]}$. By definition of query semantics, there exists $v \in V$ such that $(u,v) \in \ggsem{n}$. Let $\judbasic{\schemas}{n}{\psete_{1}}$. By induction, there exist $\xeleme{i}, \xeleme{j} \in \schemas$ such that $u \in \ssem{\xeleme{i}}$, $v \in \ssem{\xeleme{j}}$, and $(\xeleme{i}, \xeleme{j}) \in \psete_{1}$. By rule (\textsc{TypeCond}), $(\xeleme{i}, \xeleme{j}) \in \psete$.

\item[($q = \xcount{\alpha}{m}{n}$)]  The thesis follows from the soundness of typing for union and concatenation.

\item[($q = \alpha_{1} \cap \alpha_{2}$)]  Let $(u,v) \in \ggsem{\alpha_{1} \cap \alpha_{2}}$. By definition of query semantics, $(u,v) \in \ggsem{\alpha_{1}}$ and $(u,v) \in \ggsem{\alpha_{2}}$. Let $\judbasic{\schemas}{\alpha_{1}}{\psete_{1}}$ and $\judbasic{\schemas}{\alpha_{2}}{\psete_{2}}$. By induction and by uniqueness of node typing, there exist $\xeleme{i}, \xeleme{j} \in \schemas$ such that $u \in \ssem{\xeleme{i}}$, $v \in \ssem{\xeleme{j}}$, $(\xeleme{i}, \xeleme{j}) \in \psete_{1}$, and $(\xeleme{i}, \xeleme{j}) \in \psete_{2}$. By rule (\textsc{TypeIntersect}), $(\xeleme{i}, \xeleme{j}) \in \psete$. 
\end{description}

\end{proof}

The basic type inference approach returns quite simple information. This fact is counterbalanced by its polynomial complexity, as stated by the following theorem.

\begin{theorem}[Complexity of $\judbasic{\schemas}{q}{\psete}$]\label{theo:bcomplx}
$\judbasic{\schemas}{q}{\psete}$ can be evaluated in polynomial time.
\end{theorem}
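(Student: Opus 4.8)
The plan is to argue by structural induction on the query $q$, exploiting the fact that the inference is syntax-directed: each production of the RPQ, NRE and GXPath grammars is handled by exactly one rule in Tables \refDisplayHere{table:basicrpqrules}--\refDisplayHere{table:basicgxprules}, so any derivation of $\judbasic{\schemas}{q}{\psete}$ has a shape isomorphic to the parse tree of $q$ and therefore $O(|q|)$ internal nodes. It then suffices to bound the cost of the set-theoretic operation performed at each node. The key quantitative observation, which I would state first, is that every inferred set is a subset of $\schemas \times \schemas$: writing $n = |\schemas|$, every intermediate $\psete$ satisfies $|\psete| \leqslant n^2$, so all objects manipulated during inference have size $O(n^2)$.

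Next I would dispatch the base and combinator cases, none of which is problematic. For (\textsc{TypeEpsilon}) the set $\xset{(\elemei,\elemei)}{i}$ is built in $O(n)$. For (\textsc{TypeLabel}), (\textsc{TypeBackLabel}) and (\textsc{TypeAnyLabel}) one ranges over the $O(n^2)$ candidate pairs $(\elemei,\elemei^{\prime})$ and, for each, tests membership of the relevant symbol in $\sym{\gout{\elemei}}\cap\sym{\gin{\elemei^{\prime}}}$ (or its symmetric form); since each $\sym{\cdot}$ is bounded by the schema size, the whole base case is polynomial in $n$. The inductive combinators are equally cheap: (\textsc{TypeUnion}) and (\textsc{TypeIntersect}) are a union and an intersection of two $O(n^2)$ sets; (\textsc{TypeCond}) computes $\first{\psete_1}$ in $O(n^2)$ and forms a product of size $O(n^2)$; and (\textsc{TypeConc}) is the relational composition $\psete_1 \rconc \psete_2$ on a domain of $n$ elements, computable in $O(n^3)$.

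The step I expect to be the main obstacle is (\textsc{TypeStar}), whose conclusion $\bigcup_{i \geqslant 0} \psete^{i}$ is an a priori \emph{infinite} union and so cannot be evaluated term by term. The resolution is that, because $\psete \subseteq \schemas \times \schemas$ lives on a finite domain, this union is exactly the reflexive--transitive closure of the relation $\psete$: the partial unions $\bigcup_{i=0}^{k}\psete^{i}$ form a monotone increasing chain inside the finite lattice of subsets of $\schemas\times\schemas$, hence reach a fixpoint after at most $n^2$ steps, and that fixpoint is unchanged by adding further powers. Thus the rule is evaluated by iterating composition to saturation, equivalently by any $O(n^3)$ transitive-closure procedure. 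The same finiteness argument tames (\textsc{TypeCount}): $\bigcup_{i=m}^{n}\psete^{i}$ is a union of powers of a relation on an $n$-element domain, so if the counter bounds are written in unary (hence bounded by $|q|$) only polynomially many distinct powers arise, and the window union costs $O(|q|\cdot n^3)$; even for larger bounds one appeals to the eventual stabilisation of the power sequence so that no more than polynomially many distinct relations need be computed.

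Putting the pieces together, each of the $O(|q|)$ nodes of the derivation costs at most $O(n^3)$, so $\judbasic{\schemas}{q}{\psete}$ is computed in $O(|q|\cdot n^3)$ time, which is polynomial in the combined size of the query $q$ and the schema $\schemas$. I would therefore organise the write-up as: (i) the $|\psete|\leqslant n^2$ bound; (ii) a per-rule cost table covering the base cases and the cheap combinators; (iii) the closure argument for (\textsc{TypeStar}) and (\textsc{TypeCount}); and (iv) the final multiplication by the number of derivation nodes.
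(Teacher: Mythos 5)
Your overall strategy coincides with the paper's own proof sketch: the inference is syntax-directed, so a derivation of $\judbasic{\schemas}{q}{\psete}$ has one rule invocation per parse-tree node of $q$; every inferred set lives inside $\schemas\times\schemas$; and the only delicate rules are (\textsc{TypeStar}) and (\textsc{TypeCount}), which you, like the paper, resolve by reading $\psete$ as the edge set of a graph over schema elements and computing a (reflexive--)transitive closure. Your fixpoint-in-a-finite-lattice argument for (\textsc{TypeStar}) and the paper's appeal to Warshall's algorithm are interchangeable, and your explicit $|\psete|\leqslant|\schemas|^{2}$ bound is a welcome addition.

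The one genuine flaw is your fallback for (\textsc{TypeCount}) when the counters in $\xcount{\alpha}{m}{n}$ are not assumed unary. The claim that ``eventual stabilisation of the power sequence'' means only polynomially many distinct relations arise is false: the sequence $\psete,\psete^{2},\psete^{3},\ldots$ of Boolean powers is eventually periodic, but its period can be superpolynomial in $|\schemas|$ --- take $\psete$ to be a permutation of the schema elements whose cycles have lengths the first $k$ primes; its powers remain pairwise distinct for a number of steps equal to the product of those primes, which grows like $e^{\sqrt{N\ln N}}$ in the domain size $N$ (Landau's function). What does stabilise, after at most $|\schemas|^{2}+1$ steps, is the partial union $\bigcup_{i=m}^{M}\psete^{i}$ (once one step adds nothing, no later step can, since $\psete^{M+1}\subseteq U_{M}$ forces $\psete^{M+2}\subseteq U_{M}\rconc\psete\subseteq U_{M}$); but even this does not rescue the argument, because with binary-encoded bounds you must still locate the start of the window, i.e.\ compute $\psete^{m}$ for a possibly exponentially large $m$. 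The repair is exactly what the paper's sketch invokes as the ``squaring method'': write $\bigcup_{i=m}^{n}\psete^{i}=\psete^{m}\rconc(I\cup\psete)^{n-m}$, where $I$ is the identity relation, and evaluate both factors by fast exponentiation, using $O(\log m+\log n)$ relational compositions of cost $O(|\schemas|^{3})$ each. With that substitution --- or with an explicit, stated convention that counters contribute to $|q|$ in unary --- your proof is complete and matches the paper's.
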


\begin{proof}[sketch]
To prove the thesis we must first observe that, given a query $q$ of length $|q|$, each rule consumes at least one node in the parsing tree of $q$. This implies that $q$ will be typed by a number of rule invocations polynomial in $|q|$.

To complete the proof, it suffices to prove that each rule can be evaluated in polynomial time. This proof can be done by induction on the queries.

The only non trivial cases are those concerning rules (\textsc{TypeStar}) and (\textsc{TypeCount}). To evaluate these rules in polynomial time it suffices to recognize that a set $\psete$ can be interpreted as the set of edges in a schema element graph. Evaluating these rules, hence, is equivalent to the computation of the reflexive and transitive closure of the graph (bounded, in the case of rule (\textsc{TypeCount})). The unbounded closure can be computed in polynomial time by exploiting the Warshall's algorithm, while the bound closure  can be computed in polynomial time by relying on the usual squaring method.
\end{proof}

Basic type inference for RPQs is not only sound, but also complete.  Proof of completeness relies on a number of definitions and properties. 

The first definition specifies the set of paths matching a query.

\begin{definition}\label{def:paths}
Given a RPQ  $q$ on graphs over a finite alphabet $\Sigma$, the set of paths that can match $q$ is recursively defined as follows.
\begin{eqnarray*}
\paths{\epsilon} & = & \{\epsilon\}\\
\paths{a} & = & \{a\}\\
\paths{q_{1} + q_{2}} & = & \paths{q_{1}} \cup \paths{q_{2}}\\
\paths{q_{1} \cdot q_{2}} & = & \paths{q_{1}} \times \paths{q_{2}}\\
\paths{q^{*}} & = & \cup_{i \geqslant 0} \paths{q}^{i}
\end{eqnarray*}
\end{definition}

The second definition specifies when two nodes $u$ and $v$ are connected by a path $p$ in a graph $G$. A path can be either $\epsilon$ (the empty path) or a path $a \cdot p'$, where $a$ is an edge label, and $p'$ is  a path in turn. Path concatenation $p_1 \cdot p_2$ is defined in the obvious way.

%
%
%
%

\begin{definition}
Given a graph $G = (V, E, \rho)$, we say that two nodes $u, v \in V$ are connected by  a path $p$ if either $p=\epsilon$ and $u=v$, or  $p=a \cdot p'$ and there exists a pair $u,u' \in V $ such that $ (u, a, u') \in E $ and $p'$ connects nodes $u',v \in V$.

\end{definition}

It is easy to prove that, if $p_1$ connects $u,v$ and $p_2$ connects $v,t$ in $G$, then $p_1 \cdot p_2$ connects $u,t$ in $G$. The following lemma relates RPQ semantics, path semantics, and graph paths. 

\begin{lemma}\label{lemma:auxc0} Given a RPQ  $q$ and a graph $G = (V, E, \rho)$, if $u, v \in V$ are connected by  a path $p$ in $G$, and $p\in \paths{q}$, then $(u,v)\in \ggsem{q}$. 

\end{lemma}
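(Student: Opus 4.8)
The plan is to prove the statement by structural induction on the RPQ $q$, mirroring the parallel recursive definitions of $\paths{q}$ and $\ggsem{q}$. In the base cases the set $\paths{q}$ is a singleton, so the hypothesis $p \in \paths{q}$ pins $p$ down exactly and I only have to unfold the definition of ``connected by $p$''. For $q = \epsilon$ the hypothesis forces $p = \epsilon$, whence $u = v$ and $(u,u) \in \ggsem{\epsilon}$. For $q = a$ the hypothesis forces $p = a$, so by the connection definition there is a $u'$ with $(u,a,u') \in E$ and $u',v$ connected by $\epsilon$, i.e.\ $u' = v$; hence $(u,a,v) \in E$ and $(u,v) \in \ggsem{a}$. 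The union case is immediate: from $p \in \paths{q_1} \cup \paths{q_2}$ we get two symmetric subcases, each closed by the induction hypothesis followed by $\ggsem{q_1} \cup \ggsem{q_2} = \ggsem{q_1 + q_2}$.

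Before treating concatenation and star I would isolate the one genuinely non-routine ingredient, which I expect to be the main obstacle: a \emph{decomposition} property dual to the composition remark already stated in the text. Concretely, if $u,v$ are connected by $p_1 \cdot p_2$ in $G$, then there exists an intermediate node $w \in V$ such that $u,w$ are connected by $p_1$ and $w,v$ are connected by $p_2$. The difficulty is that ``connected by $p$'' is defined by a recursion that consumes the \emph{first} edge of $p$, so the split node $w$ is not handed to us syntactically and must be extracted. I would prove the claim by induction on the structure of $p_1$: if $p_1 = \epsilon$ then $p_1 \cdot p_2 = p_2$ and $w = u$ works; if $p_1 = a \cdot p_1'$, then $u,v$ connected by $a \cdot (p_1' \cdot p_2)$ gives a $u'$ with $(u,a,u') \in E$ and $u',v$ connected by $p_1' \cdot p_2$, and the inductive hypothesis applied to $p_1'$ supplies a $w$ with $u',w$ connected by $p_1'$ and $w,v$ connected by $p_2$; then $u,w$ are connected by $a \cdot p_1'$.

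With decomposition in hand, the concatenation case $q = q_1 \cdot q_2$ is routine. Since $\paths{q_1 \cdot q_2} = \paths{q_1} \times \paths{q_2}$, we may write $p = p_1 \cdot p_2$ with $p_i \in \paths{q_i}$; the decomposition claim yields $w$ with $u,w$ connected by $p_1$ and $w,v$ connected by $p_2$; the two induction hypotheses give $(u,w) \in \ggsem{q_1}$ and $(w,v) \in \ggsem{q_2}$; and composition of binary relations gives $(u,v) \in \ggsem{q_1} \rconc \ggsem{q_2} = \ggsem{q_1 \cdot q_2}$.

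Finally, for $q = r^{*}$ I would note that $p \in \paths{r}^{i}$ for some $i \geqslant 0$ and run an inner induction on $i$. When $i = 0$ the hypothesis forces $p = \epsilon$, hence $u = v$ and $(u,u) \in \ggsem{r}^{0} \subseteq \ggsem{r^{*}}$. When $i \geqslant 1$, write $p = p_1 \cdot p'$ with $p_1 \in \paths{r}$ and $p' \in \paths{r}^{i-1}$, apply decomposition to obtain the split node $w$, close the first factor with the outer (structural) induction hypothesis on $r$ to get $(u,w) \in \ggsem{r}$, close the remainder with the inner induction hypothesis on $i$ to get $(w,v) \in \ggsem{r}^{i-1}$, and conclude $(u,v) \in \ggsem{r} \rconc \ggsem{r}^{i-1} = \ggsem{r}^{i} \subseteq \ggsem{r^{*}}$. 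Apart from the decomposition claim, every step is a direct unfolding of the two recursive definitions, so I expect the whole argument to be short once that claim is established.
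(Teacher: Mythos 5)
Your proof is correct and takes essentially the same route as the paper's: structural induction on $q$, with identical treatment of the base, union, concatenation, and star cases. The only difference is that you explicitly isolate and prove the path-decomposition property (an intermediate node $w$ splitting a connection by $p_1 \cdot p_2$ into connections by $p_1$ and $p_2$), which the paper's proof invokes silently in its concatenation and star cases, so your version is if anything the more complete one.
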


\begin{proof}
By structural induction on $q$.

\begin{description}
\item[($q = \epsilon$)]  Trivial.

\item[($q = a$)] If  $p\in \paths{q}$, it must be that $p=a$. By definition of the semantics of RPQs, it follows that each pair of nodes $u, v \in V$ that is connected by  $p=a$ is in $\ggsem{q}$ .

\item[($q = q_1 + q_2$)] Simple induction.

\item[($q = q_1 \cdot q_2$)] Given that $p\in \paths{ q_1 \cdot q_2}$,  we have that $p=p_1 \cdot p_2$ with $p_i\in \paths{q}$, $i=1,2$. Also, since $p$ connects $(u,v)$,  $p_1$ connects  ($u, u_1$)  and $p_2$ connects ($u_1, v$) for a node $u_1$ in $G$. By induction we have $(u,u_1)\in \ggsem{q}$ and $(u_1,v)\in \ggsem{q}$, so the thesis follows by definition of  $\ggsem{q}$.

\item[($q= \reps{q_{1}}$)] Given that $p\in \paths{\reps{q_{1}}}$, we have that $p=p_1 \cdot \ldots \cdot p_n$  with $p_i\in \paths{q_1}$ and  $i=1\ldots n$. Since $p$ connects $(u, v) \in V$, we have that $(u, u_1), (u_1, u_2), \ldots, (u_{n-1}, u_n)$ in $G$ are respectively connected by $p_1, p2, \ldots, p_n$. So by induction we have $p_i \in \ggsem{q}$ and the thesis follows by definition of $\ggsem{q}$.
 \end{description}
\end{proof}

We need now to define paths over schemas.
\begin{definition}

Given a  schema $\schemas = \{\elemei\}_{i = 0}^{n}$,  we  say that two types   $e_o, e_t \in S$ are connected by  a path $p$ if either $p=\epsilon$ and $e_o = e_t$, or $p = a \cdot p'$ and there exists  $\eleme{}\in S$ such that:

\begin{itemize}
\item $a \in \sym{\gout{e_o}} \cap \sym{\gin{e}}$, and
\item   $e,  e_t$ are connected by  $p'$.
\end{itemize}
\end{definition}
The following lemma relates RPQ typing and paths over schemas.
\begin{lemma}\label{lemma:auxc1}

Given a schema $\schemas$ and a query $q \in \rpq$ whose inputs are described by $\schemas$, if $\judbasic{\schemas}{q}{\psete}$, then, for each $(\xeleme{i}, \xeleme{j}) \in \psete$ there exists a path $p$ over $S$ such that $p$ connects $(\xeleme{i}, \xeleme{j})$ and $p\in \paths{q} $.

\end{lemma}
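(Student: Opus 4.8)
The plan is to prove the statement by structural induction on the query $q$; since the rules of Table~\refDisplayHere{table:basicrpqrules} are syntax-directed, this is the same as inducting on the derivation of $\judbasic{\schemas}{q}{\psete}$. The one piece of scaffolding I would set up first is a concatenation property for \emph{schema} paths, mirroring exactly the one already noted for graph paths: if $p_1$ connects $(\xeleme{i}, \xeleme{k})$ over $\schemas$ and $p_2$ connects $(\xeleme{k}, \xeleme{j})$ over $\schemas$, then $p_1 \cdot p_2$ connects $(\xeleme{i}, \xeleme{j})$ over $\schemas$. This follows by a routine induction on the structure of $p_1$: the case $p_1 = \epsilon$ forces $\xeleme{i} = \xeleme{k}$, and the case $p_1 = a \cdot p_1'$ unfolds the definition of schema paths and applies the induction hypothesis to $p_1'$.

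For the base cases, if $q = \epsilon$ then rule (\textsc{TypeEpsilon}) makes every pair of $\psete$ of the form $(\xeleme{i}, \xeleme{i})$, and $p = \epsilon \in \paths{\epsilon}$ connects $\xeleme{i}$ to itself by definition; if $q = a$ then rule (\textsc{TypeLabel}) puts $(\xeleme{i}, \xeleme{j})$ in $\psete$ only when $\ls{a}{k} \in \sym{\gout{\xeleme{i}}} \cap \sym{\gin{\xeleme{j}}}$, so taking $p = a$ and the witness $\eleme = \xeleme{j}$ in the definition of schema paths shows that $p$ connects $(\xeleme{i}, \xeleme{j})$, and $a \in \paths{a}$. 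The union case $q = q_1 + q_2$ is immediate: since $\psete = \psete_1 \cup \psete_2$ with $\judbasic{\schemas}{q_l}{\psete_l}$, any $(\xeleme{i}, \xeleme{j}) \in \psete$ lies in some $\psete_l$, the induction hypothesis yields a connecting $p \in \paths{q_l}$, and $\paths{q_l} \subseteq \paths{q_1 + q_2}$.

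The two interesting cases are concatenation and star, and both rest on the concatenation lemma above. For $q = q_1 \cdot q_2$, rule (\textsc{TypeConc}) gives $\psete = \psete_1 \rconc \psete_2$, so any $(\xeleme{i}, \xeleme{j}) \in \psete$ factors through some $\xeleme{k}$ with $(\xeleme{i}, \xeleme{k}) \in \psete_1$ and $(\xeleme{k}, \xeleme{j}) \in \psete_2$; the induction hypothesis supplies $p_1 \in \paths{q_1}$ and $p_2 \in \paths{q_2}$, and the lemma makes $p_1 \cdot p_2$ connect $(\xeleme{i}, \xeleme{j})$, with $p_1 \cdot p_2 \in \paths{q_1} \times \paths{q_2} = \paths{q_1 \cdot q_2}$. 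For $q = \reps{q_1}$, rule (\textsc{TypeStar}) gives $\psete = \bigcup_{m \geq 0} \psete_1^m$ with $\judbasic{\schemas}{q_1}{\psete_1}$, so $(\xeleme{i}, \xeleme{j}) \in \psete_1^m$ for some $m$: when $m = 0$ the power is the identity relation, whence $\xeleme{i} = \xeleme{j}$ and $p = \epsilon \in \paths{q_1}^0 \subseteq \paths{\reps{q_1}}$ works; when $m > 0$ I would unfold $\psete_1^m$ into a chain $\xeleme{i} = c_0, c_1, \ldots, c_m = \xeleme{j}$ with each $(c_{l-1}, c_l) \in \psete_1$, apply the induction hypothesis to get $p_l \in \paths{q_1}$ connecting $(c_{l-1}, c_l)$, and concatenate to obtain $p_1 \cdot \ldots \cdot p_m \in \paths{q_1}^m \subseteq \paths{\reps{q_1}}$ connecting $(\xeleme{i}, \xeleme{j})$ by repeated use of the lemma. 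The only real obstacle is stating and proving the schema-path concatenation lemma cleanly and then threading it through the star case, in particular reconciling the $m = 0$ relation power (the identity) with $\epsilon \in \paths{\reps{q_1}}$; everything else is bookkeeping that parallels the soundness proof.
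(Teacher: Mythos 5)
Your proof is correct and takes essentially the same route as the paper's: structural induction on $q$, with trivial base cases and the concatenation and star cases resolved by splicing together the schema paths supplied by the induction hypothesis. The only difference is that you make explicit two points the paper leaves implicit --- the concatenation property for schema paths and the $m=0$ case of rule (\textsc{TypeStar}) --- which is a tightening of the same argument rather than a different approach.
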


\begin{proof}
By structural induction on $q$. Cases $q = \epsilon$,  $q = a$, and $q = q_1 + q_2$ are trivial. 
\begin{description}

\item[($q = q_1 \cdot q_2$)] We have that $\judbasic{\schemas}{q_{i}}{\psete_{i}}$, with $i=1,2$, and $\psete= {\psete_{1} \rconc \psete_{2}}$.   By induction we have that, for each pair $(\xeleme{1}, \xeleme{3})\in \psete_1$ and $(\xeleme{3}, \xeleme{2})\in \psete_2$,  there exist $p_1$ connecting $(\xeleme{1}, \xeleme{3})$ and $p_2$ connecting $(\xeleme{3}, \xeleme{1})$ on $\schemas$. So the thesis follows by taking $p=p_1 \cdot p_2$.

\item[($q=\reps{q_{1}}$)] The case is similar to the above once observed that $ (\xeleme{i}, \xeleme{j})\in \psete$ means that there exist  $(\xeleme{i}, \xeleme{1}')$, $(\xeleme{1}', \xeleme{2}') $, $\ldots$, $(\xeleme{n}', \xeleme{2})$ in $\psete$ and that each of these couples is connected by a path $p_i$ on $\schemas$ such that $p_i\in \paths{q_1}$, with $i=1,\ldots, n$.

 \end{description}
\end{proof}

To prove completeness of our RPQ typing rules we also need  the following definition and couple of lemmas. 

\begin{definition} Given a schema $\schemas$ and its double normalisation $\xxnorm{\schemas}$, we indicate with $F_S$ the function from $\xxnorm{\schemas}$ to $\schemas$ associating to each type $\eleme$ in  $\xxnorm{\schemas}$ the unique type $\eleme'$ in $\schemas$ from which $\eleme$ has been generated (Definition \ref{def:dnorm}).

Also, for every $\eleme'$ in ${\schemas}$, $F^{-1}_{S}(\eleme')$ is the set of element types generated by $\eleme'$ by means of double normalisation. 

\end{definition}

\begin{lemma}\label{lem:auxc4} Given a schema $\schemas$ with the pair of element types $(\eleme_1, \eleme_2 )$, if this pair is connected by $p$ on $\schemas$, then there exist $\eleme'_i\in F^{-1}_{S}(\eleme_i)$ for $i=1,2$ such that $(\eleme'_1, \eleme'_2 )$ are connected by $p$ in $\xxnorm{\schemas}$.
\end{lemma}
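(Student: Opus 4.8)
The plan is to reduce the statement to a single structural fact about double normalisation (Definition~\ref{def:dnorm}) and then lift a path disjunct by disjunct. The driving observation is that an element type $\eleme=\geleme$ of $\schemas$ generates in $\xxnorm{\schemas}$ \emph{every} pair $(C_i,D_j)$ with $\xnorm{\gin{\eleme}}=C_1+\ldots+C_n$ and $\xnorm{\gout{\eleme}}=D_1+\ldots+D_m$, so the choice of incoming disjunct $C_i$ and of outgoing disjunct $D_j$ are completely independent. Moreover $\xnorm{\cdot}$ neither adds nor removes symbols (an easy induction on Definition~\ref{def:norm}, using that rule~(4) only distributes $\ileave$ over $+$; equivalently a consequence of $T\equiv\xnorm{T}$), whence $\sym{\gin{\eleme}}=\bigcup_i \sym{C_i}$ and $\sym{\gout{\eleme}}=\bigcup_j \sym{D_j}$. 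Consequently, for any $a\in\sym{\gin{\eleme}}$ there is a representative in $F^{-1}_{S}(\eleme)$ whose incoming expression contains $a$, and, independently, for any $b\in\sym{\gout{\eleme}}$ one whose outgoing expression contains $b$.

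First I would unfold the hypothesis. Writing $p=a_1\cdot\ldots\cdot a_l$, the definition of connection on $\schemas$ yields a sequence of types $g_0=\eleme_1,g_1,\ldots,g_l=\eleme_2$ in $\schemas$ with $a_k\in\sym{\gout{g_{k-1}}}\cap\sym{\gin{g_k}}$ for $k=1,\ldots,l$ (the base case $p=\epsilon$ simply gives $\eleme_1=\eleme_2$, and a single common representative suffices). Then, for each $g_k$ I pick a representative $g'_k\in F^{-1}_{S}(g_k)$ by aligning its disjuncts to the incident edge labels: its incoming disjunct is chosen to contain $a_k$ (for $k\geq 1$) and its outgoing disjunct to contain $a_{k+1}$ (for $k\leq l-1$); at the two endpoints only one side is constrained and the remaining disjunct is arbitrary. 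By the independence and symbol-preservation facts above, all these choices can be made at once. Setting $\eleme'_1=g'_0$ and $\eleme'_2=g'_l$, we get $a_k\in\sym{\gout{g'_{k-1}}}$ (the chosen outgoing disjunct of $g'_{k-1}$) and $a_k\in\sym{\gin{g'_k}}$ (the chosen incoming disjunct of $g'_k$), so $g'_0,\ldots,g'_l$ witnesses that $(\eleme'_1,\eleme'_2)$ is connected by $p$ in $\xxnorm{\schemas}$, which is the thesis.

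The one subtle point — and the real content of the lemma — surfaces if one instead argues by induction on $p$, matching the style of Lemma~\ref{lemma:auxc1}. In the step $p=a\cdot p'$ the hypothesis applied to $(e,\eleme_2)$ connected by $p'$ returns a representative of $e$ whose \emph{outgoing} disjunct is fixed so that $p'$ may start, whereas I additionally need its \emph{incoming} disjunct to contain $a$ so that the new leading edge attaches to $\eleme'_1$. I expect this to be the main obstacle. It is resolved precisely by the independence property: the connection of $(e',\eleme'_2)$ by $p'$ constrains only the outgoing side of $e'$ (or nothing at all when $p'=\epsilon$), so I may replace the incoming disjunct of that representative by one containing $a$ while keeping its outgoing disjunct, and double normalisation guarantees this new incoming/outgoing combination is again in $F^{-1}_{S}(e)$. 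The one-pass construction above simply avoids this bookkeeping by fixing every disjunct simultaneously; everything else is routine unfolding of definitions.
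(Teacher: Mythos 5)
Your proof is correct, and it rests on exactly the insight that carries the paper's proof, though you package it differently. The paper argues by induction on $|p|$: in the step $p = a\cdot p'$ it obtains from the induction hypothesis a representative $\eleme'_3 \in F^{-1}_{S}(\eleme_3)$ of the intermediate type that receives $a$, and a possibly different representative $\eleme''_3 \in F^{-1}_{S}(\eleme_3)$ from which $p'$ departs, and then recombines them into $\eleme'''_3=({\eleme'_3}.in,\,{\eleme''_3}.out)$, observing that this mixed pair is again in $F^{-1}_{S}(\eleme_3)$ by Definition \ref{def:dnorm}. That recombination step is precisely the ``independence of the incoming and outgoing disjuncts'' that you isolate at the outset; indeed, your closing paragraph on the inductive variant reconstructs the paper's argument almost verbatim. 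Your main argument instead makes one global pass: unfold the connection witness $g_0,\ldots,g_l$ in $\schemas$ and choose, for each $g_k$, a representative whose incoming disjunct contains $a_k$ and whose outgoing disjunct contains $a_{k+1}$, all choices being simultaneously available by independence plus symbol preservation ($\sym{\gin{\eleme}}=\bigcup_i\sym{C_i}$, which the paper uses implicitly in its base case). This buys a cleaner, induction-free presentation; the paper's version buys uniformity of style with Lemma \ref{lemma:auxc1}. One could note that when the path revisits a type your construction may pick different representatives for different occurrences, but connection over a schema imposes no consistency across occurrences, so this is harmless.
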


\begin{proof} By induction on $|p|$. If $p=a$, then there exist two types $e_1$ and $e_2$ in $\schemas$ such that $a\in \sym{{e_1}.out }\cap \sym{{e_2}.in }$. We have $\xnorm{\sym{{e_1}.out}}=C_1 + \ldots + C_n$ and $a\in \sym{C_i}$ for $i=1,\ldots,n$. Similarly, $\xnorm{\sym{{e_2}.in}}=C'_1 + \ldots + C'_m$ and $a\in \sym{C'_j }$ for for $j=1,\ldots,m$. By definition of double normalisation we know that there will be $e'_1$ and $e'_2$  obtained by normalisation of $e_1$ and $e_2$  such that ${e'_1}.out = C_i$ and ${e'_2}.in=C'_j$, which prove the basic case. 

Concerning the  case  $p=a.p'$ with $p'=b.p''$ (if $p'$ is empty the case has already been proved), by hypothesis we have that $(\eleme_1, \eleme_3 )$ is connected by $a$ and that $(\eleme_3, \eleme_2 )$ is connected by $p'$. By induction we can assume that there exists in $\xxnorm{\schemas}$ a couple  $(\eleme'_1, \eleme'_3 )$ connected by $a$ and  a couple $(\eleme''_3, \eleme'_2 )$ connected by $p'$, with $\eleme'_3$ and $\eleme''_3$ in $F^{-1}_{S}(\eleme_3)$. It is easy to prove that the type $\eleme'''_3=({\eleme'_3}.in, {\eleme''_3}.out)$  is in  $F^{-1}_{S}(\eleme_3)$ and that it both receives $a$ and emits $b$. So we have that $a$ connects $(\eleme'_1, \eleme'''_3 )$ and $p'$ connects $(\eleme'''_3, \eleme'_2 )$ in $\xxnorm{\schemas}$, so the thesis is proved.

\end{proof}

\begin{lemma}\label{lemma:auxc3} For any  well formed schema  $\schemas$   there exists a graph $G$ that respects  $\schemas$ and such that: if $(\eleme_1, \eleme_2 )$ are connected by $p$ in ${\schemas}$, then two nodes $(n_1,n_2)$ in $G$ are connected by $p$, with $n_i$ having type $\eleme_i$, for $i=1,2$. 

\end{lemma}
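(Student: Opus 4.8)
The plan is to exhibit one highly saturated graph $G$ built directly from the double normalisation $\xxnorm{\schemas}$, and then to transfer the statement from $\schemas$ to $\xxnorm{\schemas}$ using Lemma \ref{lem:auxc4}. Concretely, I would take one node $n_{\eleme}$ for every element type $\eleme = (in, out)$ of $\xxnorm{\schemas}$, and install an edge $\edge{n_{\eleme}}{a}{n_{\eleme'}}$ for \emph{every} triple with $a \in \sym{\gout{\eleme}} \cap \sym{\gin{\eleme'}}$. Call this the \emph{complete graph}. By construction it enjoys the property (P): whenever $a \in \sym{\gout{\eleme}} \cap \sym{\gin{\eleme'}}$ for types $\eleme,\eleme'$ of $\xxnorm{\schemas}$, the edge $\edge{n_{\eleme}}{a}{n_{\eleme'}}$ belongs to $G$. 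This graph is finite, and the whole argument reduces to showing that it respects the schema and that it realises every schema path.

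First I would check that $G$ respects $\xxnorm{\schemas}$, hence also $\schemas$ (recall $\schemas$ and $\xxnorm{\schemas}$ are equivalent). Fix $n_{\eleme}$ with $\eleme = (in, out)$. By property (*) of the non-emptiness theorem, $in$ and $out$ are union-free and CF, so $in$ is an unordered product of atoms of the form $a$, $\reps{a}$, $\plus{a}$. The incoming edges of $n_{\eleme}$ carry, for each $a \in \sym{in}$, exactly one copy per type emitting $a$, and carry no symbol outside $\sym{in}$. The crucial point is well-formedness: if $a$ occurs plain (not under a star) in $in$, then $a$ cannot occur in two distinct outgoing expressions, so by condition 1 of Definition \ref{def:grsch} exactly one type emits $a$ and $n_{\eleme}$ receives exactly one $a$-edge; if $a$ occurs under a star, any multiplicity is admissible and condition 1 still supplies the at-least-one needed for $\plus{a}$. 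The symmetric argument, using the dual well-formedness clause together with condition 2, handles the outgoing edges. Hence $\econc{\fin{n_{\eleme}}} \in \rsem{in}$ and $\econc{\fout{n_{\eleme}}} \in \rsem{out}$, so each node conforms to its type, and $G$ respects $\xxnorm{\schemas}$.

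Next I would prove, by induction on $|p|$, that connectivity in $\xxnorm{\schemas}$ is faithfully realised in $G$: if $\eleme$ and $\eleme'$ of $\xxnorm{\schemas}$ are connected by $p$, then $n_{\eleme}$ and $n_{\eleme'}$ are connected by $p$ in $G$. The base case $p = \epsilon$ is immediate, since $\eleme = \eleme'$ gives $n_{\eleme} = n_{\eleme'}$. For $p = a \cdot p'$, schema connectivity supplies an intermediate type $\eleme''$ with $a \in \sym{\gout{\eleme}} \cap \sym{\gin{\eleme''}}$ such that $\eleme''$ and $\eleme'$ are connected by $p'$; property (P) yields the edge $\edge{n_{\eleme}}{a}{n_{\eleme''}}$, and the induction hypothesis yields a path $p'$ from $n_{\eleme''}$ to $n_{\eleme'}$, which combine into a path $p$ from $n_{\eleme}$ to $n_{\eleme'}$. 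Finally I would assemble the claim for $\schemas$: given $(\eleme_1, \eleme_2)$ connected by $p$ in $\schemas$, Lemma \ref{lem:auxc4} produces $\eleme'_i \in F^{-1}_{S}(\eleme_i)$ with $(\eleme'_1, \eleme'_2)$ connected by $p$ in $\xxnorm{\schemas}$; the induction above then connects $n_{\eleme'_1}$ and $n_{\eleme'_2}$ by $p$ in $G$, and these nodes have type $\eleme_1,\eleme_2$ in $\schemas$ because conforming to $\eleme'_i$ entails conforming to $\eleme_i$ (the language of each disjunct is contained in that of the type it was generated from).

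I expect the main obstacle to be the second step: arguing that the complete graph actually respects the schema, and in particular that the \emph{plain-occurrence} case forces a unique counterpart type. This is precisely where well-formedness is indispensable, and it is the only place where one must simultaneously exploit the single-occurrence CF shape of normalised types, both well-formedness clauses, and conditions 1--2 of Definition \ref{def:grsch}. Everything else (the two inductions and the transfer through $F^{-1}_{S}$) is routine once property (P) is in hand.
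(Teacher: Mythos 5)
Your proposal is correct, and it takes a genuinely different route on the one point that matters. Both arguments share the same skeleton: one node per element type of $\xxnorm{\schemas}$, plus the transfer of schema-level connectivity from $\schemas$ to $\xxnorm{\schemas}$ via Lemma \ref{lem:auxc4}. But where you build the fully \emph{saturated} graph --- an edge $\edge{n_{\eleme}}{a}{n_{\eleme'}}$ for every pair with $a \in \sym{\gout{\eleme}} \cap \sym{\gin{\eleme'}}$ --- and prove conformance from scratch, the paper instead reuses the graph constructed incrementally in the proof of the non-emptiness theorem (its citation of Theorem \ref{theo:rpqcompl} at that point is circular, since Theorem \ref{theo:rpqcompl} depends on this very lemma; the intended reference can only be the non-emptiness construction) and asserts that path traceability ``directly'' follows from each node having an incoming/outgoing $a$-edge for every symbol $a$ in its type. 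Your version buys something real here: the non-emptiness graph only guarantees that each node has \emph{some} $a$-edge per symbol, not that this edge reaches the node of the particular type demanded by the next step of a schema path, so the paper's closing claim hides exactly the gap that your property (P) closes --- (P) mirrors, edge by edge, the definition of connectivity over schemas, which makes your path-realization induction immediate. The price is that conformance of the saturated graph must be argued directly, and your argument there is the right one: per-symbol counting in union-free CF expressions, with the plain-occurrence case pinned down by combining the well-formedness clause (at most one emitter, resp.\ receiver) with conditions 1--2 of Definition \ref{def:grsch} (at least one). The only point worth spelling out in a final write-up is that $\plus{a}$ atoms are safe under either reading of ``under a $*$'': if $\plus{a}$ counts as starred, any positive multiplicity is accepted; if it counts as plain, uniqueness of the emitter yields exactly one edge, which $\plus{a}$ also accepts --- your remark about ``the at-least-one needed for $\plus{a}$'' covers this, but it deserves an explicit sentence.
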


\begin{proof}
For any $(\eleme_1, \eleme_2 )$ in $\schemas$ by Lemma \ref{lem:auxc4} we have that there exist $\eleme'_i\in F^{-1}_{S}(\eleme_i)$ for $i=1,2$ such that $(\eleme'_1, \eleme'_2 )$ are connected by $p$ in $\xxnorm{\schemas}$. We then prove that the desired $G$ exists for $\xxnorm{\schemas}$ with nodes $(n_1,n_2)$ having types $(\eleme'_1, \eleme'_2 )$, and therefore $(\eleme_1, \eleme_2 )$.

The proof of the existence of such a $G$ is quite similar to that for Theorem \ref{theo:rpqcompl}, so we omit  the details, and just observe that  the graph $G$ we are looking for is actually the graph built in that proof, which ensures that each node of the built graph corresponds to a different type in $\xxnorm{\schemas}$, and that each node of a type $\eleme'$ in $\xxnorm{\schemas}$ has  a incoming (outgoing) $a$-edge for each $a\in \sym{\eleme'.in}$ (for each $a\in \sym{\eleme'.out}$) . This directly implies that we can connect  the two nodes $(n_1, n_2)$ in $G$ corresponding to types $(\eleme'_1, \eleme'_2 )$ with the path $p$. 
\end{proof}

We have now all the tools to prove completeness of RPQ typing over well-formed schemas. 

\begin{theorem}\label{theo:rpqcompl}
Given a well-formed schema $\schemas$ and a query $q \in \rpq$ whose inputs are described by $\schemas$, if $\judbasic{\schemas}{q}{\psete}$, then, for each $(\xeleme{i}, \xeleme{j}) \in \psete$, there exists $G \in \ssem{\schemas}$ such that $\ggsem{q}$ contains $(u,v)$, where $u \in \ssem{\xeleme{i}}$ and $v \in \ssem{\xeleme{j}}$.
\end{theorem}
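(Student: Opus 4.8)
The plan is to prove the statement not by a fresh induction but by composing the three auxiliary results already established, so that all the genuine work is delegated to them and the theorem becomes a short chaining argument. The high-level idea is: the set $\psete$ records, at the level of types, which endpoint pairs the query can reach; Lemma~\ref{lemma:auxc1} turns each such pair into a concrete schema path realising $q$; Lemma~\ref{lemma:auxc3} realises that schema path inside an actual schema-conforming graph; and Lemma~\ref{lemma:auxc0} reads that graph path back as membership in $\ggsem{q}$.

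Concretely, I would fix an arbitrary pair $(\xeleme{i}, \xeleme{j}) \in \psete$ coming from a derivation of $\judbasic{\schemas}{q}{\psete}$. Applying Lemma~\ref{lemma:auxc1} to this derivation yields a path $p$ over $\schemas$ that connects $(\xeleme{i}, \xeleme{j})$ and satisfies $p \in \paths{q}$; this is the step that converts the abstract membership in $\psete$ into something geometric.

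Next I would invoke Lemma~\ref{lemma:auxc3}: since $\schemas$ is well-formed, there is a graph $G \in \ssem{\schemas}$ in which the schema-level connection by $p$ is realised at the node level, i.e. there are nodes $u,v \in G$ with $u \in \ssem{\xeleme{i}}$ and $v \in \ssem{\xeleme{j}}$ that are connected by the very same $p$. Finally, because $u$ and $v$ are connected by $p$ in $G$ and $p \in \paths{q}$, Lemma~\ref{lemma:auxc0} gives $(u,v) \in \ggsem{q}$, which is exactly the required conclusion, witnessed by this $G$, $u$ and $v$.

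The main obstacle is not in this composition but in the ingredient it rests on, namely the realising graph $G$ of Lemma~\ref{lemma:auxc3}, which is where well-formedness is actually consumed. That graph is the one produced by the constructive, terminating argument underlying the non-emptiness theorem: it carries exactly one node per type of $\xxnorm{\schemas}$, each node having an incoming (outgoing) $a$-edge for every symbol $a$ in its $in$ ($out$) expression, and well-formedness is precisely what guarantees that the dangling edges created during the construction can always be absorbed by existing nodes (via occurrences under a $*$) without triggering an unbounded cascade. A delicate bookkeeping point I would be careful about is that the inferred pair lives in $\schemas$ whereas the realising graph is built over $\xxnorm{\schemas}$; this gap is bridged inside Lemma~\ref{lemma:auxc3} by Lemma~\ref{lem:auxc4}, which lifts a schema path on $(\xeleme{i}, \xeleme{j})$ to a path between normalised types in $F^{-1}_{S}(\xeleme{i})$ and $F^{-1}_{S}(\xeleme{j})$, whose nodes still have types $\xeleme{i}$ and $\xeleme{j}$ back in $\schemas$. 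To keep the argument non-circular — Lemma~\ref{lemma:auxc3} as phrased defers to this very theorem — I would ground the graph construction directly in the non-emptiness theorem's proof rather than in the theorem being proved here, and present the dependency order as Lemma~\ref{lemma:auxc1}, then Lemma~\ref{lem:auxc4}/\ref{lemma:auxc3}, then Lemma~\ref{lemma:auxc0}.
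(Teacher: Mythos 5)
Your proposal is correct and follows exactly the paper's own argument: fix $(\xeleme{i},\xeleme{j}) \in \psete$, apply Lemma~\ref{lemma:auxc1} to obtain a schema path $p$ connecting the pair with $p \in \paths{q}$, apply Lemma~\ref{lemma:auxc3} to realise $p$ between nodes of types $\xeleme{i}$ and $\xeleme{j}$ in some $G \in \ssem{\schemas}$, and conclude $(u,v) \in \ggsem{q}$ by Lemma~\ref{lemma:auxc0}. Your additional remark that the realising graph must be grounded in the non-emptiness theorem's construction rather than in Theorem~\ref{theo:rpqcompl} itself is well taken: it correctly repairs what is in fact a mislabelled (and potentially circular-looking) reference inside the paper's proof of Lemma~\ref{lemma:auxc3}.
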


\begin{proof}
By Lemma \ref{lemma:auxc1} we have that $(\xeleme{i}, \xeleme{j})$ are connected by a path $p$ in $\schemas$ and that $p\in \paths{q}$. By Lemma  \ref{lemma:auxc3} we have that there exists a graph $G$ meeting $\schemas$ an including two nodes $(u,v)$ connected by $p$. So by Lemma \ref{lemma:auxc0}  we can conclude that  $(u,v)\in \ggsem{q} $ and $u \in \ssem{\xeleme{i}}$ and $v \in \ssem{\xeleme{j}}$.

\end{proof}

Theorem \ref{theo:rpqcompl} has important consequences. Indeed, as shown by Corollary \ref{cor:sat}, we can prove that, for RPQs, the following satisfiability problem can be decided in polynomial time.

\DC{SAT deve avere anche S come parametro?}
\begin{problem}[$\xsat{\chi}$]
Given a   well-formed  $\schemas$ and a query $q$ in a language $\chi$, is there a graph $G \in \ssem{\schemas}$ such that $\ggsem{q} \neq \emptyset$?
\end{problem}


\begin{corollary}\label{cor:sat}
$\rsat$ can be decided in polynomial time.
\end{corollary}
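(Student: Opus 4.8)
The plan is to reduce satisfiability of $q$ over $\schemas$ to an emptiness check on the inferred type set $\psete$, where $\judbasic{\schemas}{q}{\psete}$ is the judgment produced by the basic inference rules. Concretely, I would show that $q$ is satisfiable on some $G \in \ssem{\schemas}$ if and only if $\psete \neq \emptyset$. Since by Theorem \ref{theo:bcomplx} the set $\psete$ is computable in polynomial time, and testing $\psete \neq \emptyset$ is immediate, this equivalence yields a polynomial-time decision procedure for $\rsat$.

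First I would establish the ``only if'' direction using soundness. Suppose $q$ is satisfiable, i.e.\ there is $G \in \ssem{\schemas}$ with $\ggsem{q} \neq \emptyset$, and pick any $(u,v) \in \ggsem{q}$. By Theorem \ref{theo:basicsound} there exists a pair $(\xeleme{i}, \xeleme{j}) \in \psete$ with $u \in \ssem{\xeleme{i}}$ and $v \in \ssem{\xeleme{j}}$; in particular $\psete \neq \emptyset$.

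Next I would establish the ``if'' direction using completeness. Suppose $\psete \neq \emptyset$ and fix any $(\xeleme{i}, \xeleme{j}) \in \psete$. By Theorem \ref{theo:rpqcompl} there is a graph $G \in \ssem{\schemas}$ whose semantics $\ggsem{q}$ contains a pair $(u,v)$ with $u \in \ssem{\xeleme{i}}$ and $v \in \ssem{\xeleme{j}}$; hence $\ggsem{q} \neq \emptyset$ and $q$ is satisfiable. Combining the two directions gives the equivalence, and the bound of Theorem \ref{theo:bcomplx} then completes the argument.

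The proof is essentially a corollary-style packaging of the three preceding results, so I do not expect a genuine obstacle. The only point requiring a little care is that the ``if'' direction needs the full completeness theorem (Theorem \ref{theo:rpqcompl}) rather than merely the existence of a conforming graph: the inferred pair $(\xeleme{i}, \xeleme{j})$ must be witnessed by an \emph{actual} matching path in some conforming graph, which is exactly what completeness provides via the path construction of Lemmas \ref{lemma:auxc1} and \ref{lemma:auxc3}. Well-formedness of $\schemas$ enters here only indirectly, through its role in guaranteeing that the witnessing graph $G$ exists.
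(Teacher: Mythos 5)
Your proposal is correct and matches the paper's own argument: both combine soundness (Theorem \ref{theo:basicsound}), completeness (Theorem \ref{theo:rpqcompl}), and the polynomial-time bound (Theorem \ref{theo:bcomplx}) to reduce satisfiability to checking whether the inferred set $\psete$ is empty. The only difference is cosmetic --- you state the equivalence positively ($q$ satisfiable iff $\psete \neq \emptyset$) while the paper states the contrapositive form and argues each direction by contradiction.
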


\begin{proof}
Consider a query $q$ and a well-formed $\schemas$.  We must first prove that:
$$(\forall G \in \ssem{\schemas}: \  \ggsem{q} = \emptyset) \iff  \judbasic{\schemas}{q}{\emptyset}$$
\begin{description}
\item[$(\Rightarrow)$]
 Assume that there is no graph $G \in \ssem{\schemas}$ such that $\ggsem{q} \neq \emptyset$. Then, by Theorem  \ref{theo:rpqcompl}, $\judbasic{\schemas}{q}{\emptyset}$. Indeed, if $\judbasic{\schemas}{q}{\psete}$ with $\psete \neq \emptyset$, by Theorem \ref{theo:rpqcompl} there would exist at least one graph $G \in \schemas$ for which $\ggsem{q} \neq \emptyset$, which is a contradiction.

\item[$(\Leftarrow)$]  
Assume that  $\judbasic{\schemas}{q}{\emptyset}$. Then, by Theorem \ref{theo:basicsound}, there is no graph $G \in \ssem{\schemas}$ such that $\ggsem{q} \neq \emptyset$. Indeed, if  $\ggsem{q} \neq \emptyset$, then, by Theorem \ref{theo:basicsound}, $\judbasic{\schemas}{q}{\psete}$, where $\psete \neq \emptyset$, which is a contradiction.
\end{description}

The fact that $\judbasic{\schemas}{q}{\psete}$ can be evaluated in polynomial (by Theorem \ref{theo:bcomplx}) completes the proof.
\end{proof}


\medskip

While sound, basic type inference is not complete for NREs and GXPath queries, as shown in the following example. 

\begin{example}
Consider the following graph schema:
\[
\begin{array}{lllllllllll}
\xeleme{1} & = & (\epsilon, \ls{a}{1} + \ls{b}{1}) && \xeleme{4} & = & (\reps{(\ls{c}{1})}, \epsilon)\\
\xeleme{2} & = & (\reps{(\ls{a}{1})}, \ls{c}{1}) && \xeleme{5} & = & (\reps{(\ls{d}{1})}, \epsilon)\\
\xeleme{3} & = & (\reps{(\ls{b}{1})}, \ls{d}{1})&& 
\end{array}
\]

Consider now the query $q = [b] \cdot a \cdot c$. This query looks for nodes having outgoing edges labelled with $a$ and $b$. As $\xeleme{1}$ prescribes that these edges are mutually exclusive, the result of this query is always empty. However, the rules first infer the set $\{(\xeleme{1}, \xeleme{1})\}$ for the nested regular expression, and the set $\{(\xeleme{1}, \xeleme{4})\}$ for the $a \cdot c$. The inferred set, hence, is $\{(\xeleme{1}, \xeleme{4})\}$.
\end{example}

\section{Related Works}\label{sec:relworks}
Describing the structure of graphs is a subject that has been analyzed only in a few papers. Graph grammars \cite{DBLP:conf/ijcai/PfaltzR69} probably represent the most widely known approach for describing graphs. As a plain string grammar, a graph grammar shows how a graph can be generated starting from a source node, by applying a set of production rules. As for string or tree grammars, graph grammars are used for generating graphs, for transforming existing graphs into new ones, or for pattern matching, but they are not suitable for type inference.

TSL is the schema language of Trinity \cite{DBLP:conf/sigmod/ShaoWL13}, a main-memory graph processing system based on the Microsoft ecosystem. By using a TSL script, which is compiled in .NET object code,  it is possible to specify the structure of nodes, which can have richly defined values, e.g., those required by BFS and DFS algorithms, as well as the type of outgoing edges; apparently, there is no way to describe constraints on incoming edges, which can have any cardinality.

SheX \cite{DBLP:conf/icdt/StaworkoBGHPS15} is a schema language for RDF data. As in TSL, in SheX it is possible to describe complex node structures, and, unlike in TSL, outgoing edges can be defined by using regular expressions. However, just as in TSL, there is no way to specify constraints on incoming edges. This means that, for instance, in a schema describing cars and car owners, one can impose the constraint that a single person can own at most $n$ cars, but not the constraint that a car can have one single owner at a time. This makes impossible to define empty SheX schemas, but it limits the expressivity of the language.


\section{Conclusions and Future Work}\label{sec:concl}

In this paper we described a schema language for data graphs, introduced a mild restriction that makes impossible to define empty schemas, and used this schema formalism to create a type inference system for RPQs, NREs, and GXPath queries. This type inference system is sound and, in the case of RPQs, it is also complete, hence making possible to check the satisfiability of a query just by looking at its inferred type.

In the near future we plan to work on three directions. First, we want to understand if schema well-formedness can be checked in polynomial or if double normalization is mandatory. Second, we want to better investigate the emptiness problem for a more general class of graph schemas and try to relax the 1/* constraint, in particular by exploring approaches for checking the consistency of systems of linear diophantine equations with linear parameters. Finally, we want to study type inference techniques that return more detailed information about an input query, e.g., the set of paths that the query may traverse in an input graph matching a given schema. 


\section{Acknowledgements}\label{sec:ack}

Authors would like to thank Maria Grazia Russo for her suggestions about systems of diophantine equations.


\begin{thebibliography}{25}
\providecommand{\natexlab}[1]{#1}
\providecommand{\url}[1]{\texttt{#1}}
\expandafter\ifx\csname urlstyle\endcsname\relax
  \providecommand{\doi}[1]{doi: #1}\else
  \providecommand{\doi}{doi: \begingroup \urlstyle{rm}\Url}\fi

\bibitem[DBL()]{DBLPRDF}
Dblp bibliographic database {\tt{http://dblp.l3s.de/d2r}}.

\bibitem[cyp()]{cypher}
Cypher.
\newblock {\tt{http://neo4j.com/docs/stable/cypher-query}\- -lang.html}.

\bibitem[pql()]{pql}
Pql.
\newblock
  {\tt{http://wiki.infinitegraph.com/3.4/w/index.php}\-?title=PQL/Getting\_Started\_with\_PQL}.

\bibitem[Berglund et~al.(2010)Berglund, Boag, Chamberlin, Fern\'andez, Kay,
  Robie, and Sim\'eon]{XPath2.0}
A.~Berglund, S.~Boag, D.~Chamberlin, M.~F. Fern\'andez, M.~Kay, J.~Robie, and
  J.~Sim\'eon.
\newblock {XML} {P}ath {L}anguage ({XP}ath) 2.0 ({S}econd {E}dition).
\newblock Technical report, World Wide Web Consortium, 2010.
\newblock W3C Recommendation.

\bibitem[Bray et~al.(2006)Bray, Paoli, Sperberg-McQueen, Maler, Yergeau, and
  Cowan]{XML1.1}
T.~Bray, J.~Paoli, C.~M. Sperberg-McQueen, E.~Maler, F.~Yergeau, and J.~Cowan.
\newblock {E}xtensible {M}arkup {L}anguage ({XML}) 1.1 ({S}econd {E}dition).
\newblock Technical report, World Wide Web Consortium, 2006.
\newblock W3C Recommendation.

\bibitem[Calvanese et~al.(2000)Calvanese, Giacomo, Lenzerini, and
  Vardi]{DBLP:conf/pods/CalvaneseGLV00}
D.~Calvanese, G.~D. Giacomo, M.~Lenzerini, and M.~Y. Vardi.
\newblock View-based query processing for regular path queries with inverse.
\newblock In V.~Vianu and G.~Gottlob, editors, \emph{PODS}, pages 58--66. ACM,
  2000.
\newblock ISBN 1-58113-214-X.

\bibitem[Choi(2002)]{DBLP:conf/webdb/Choi02}
B.~Choi.
\newblock What are real {DTD}s like?
\newblock In \emph{WebDB}, pages 43--48, 2002.

\bibitem[Clauss(1998)]{clauss98}
P.~Clauss.
\newblock Advances in parameterized linear diophantine equations for precise
  program analysis.
\newblock Technical report, ICPS?? - Universit??\'e Louis Pasteur,??
  Strasbourg, sep 1998.
\newblock [ICPS RR 98-02].

\bibitem[Colazzo et~al.(2009)Colazzo, Ghelli, and
  Sartiani]{DBLP:journals/is/ColazzoGS09}
D.~Colazzo, G.~Ghelli, and C.~Sartiani.
\newblock Efficient inclusion for a class of {XML} types with interleaving and
  counting.
\newblock \emph{Inf. Syst.}, 34\penalty0 (7):\penalty0 643--656, 2009.
\newblock \doi{10.1016/j.is.2008.10.001}.
\newblock URL \url{http://dx.doi.org/10.1016/j.is.2008.10.001}.

\bibitem[Colazzo et~al.(2013{\natexlab{a}})Colazzo, Ghelli, Pardini, and
  Sartiani]{DBLP:journals/tcs/ColazzoGPS13}
D.~Colazzo, G.~Ghelli, L.~Pardini, and C.~Sartiani.
\newblock Efficient asymmetric inclusion of regular expressions with
  interleaving and counting for {XML} type-checking.
\newblock \emph{Theor. Comput. Sci.}, 492:\penalty0 88--116,
  2013{\natexlab{a}}.
\newblock \doi{10.1016/j.tcs.2013.04.023}.
\newblock URL \url{http://dx.doi.org/10.1016/j.tcs.2013.04.023}.

\bibitem[Colazzo et~al.(2013{\natexlab{b}})Colazzo, Ghelli, Pardini, and
  Sartiani]{DBLP:journals/tods/ColazzoGPS13}
D.~Colazzo, G.~Ghelli, L.~Pardini, and C.~Sartiani.
\newblock Almost-linear inclusion for {XML} regular expression types.
\newblock \emph{{ACM} Trans. Database Syst.}, 38\penalty0 (3):\penalty0 15,
  2013{\natexlab{b}}.
\newblock \doi{10.1145/2508020.2508022}.
\newblock URL \url{http://doi.acm.org/10.1145/2508020.2508022}.

\bibitem[Consens and Mendelzon(1990)]{DBLP:conf/pods/ConsensM90}
M.~P. Consens and A.~O. Mendelzon.
\newblock Graphlog: a visual formalism for real life recursion.
\newblock In D.~J. Rosenkrantz and Y.~Sagiv, editors, \emph{PODS}, pages
  404--416. ACM Press, 1990.
\newblock ISBN 0-89791-352-3.

\bibitem[Domenjoud(1991)]{DBLP:conf/mfcs/Domenjoud91}
E.~Domenjoud.
\newblock Solving systems of linear diophantine equations: An algebraic
  approach.
\newblock In \emph{{MFCS}}, pages 141--150, 1991.
\newblock \doi{10.1007/3-540-54345-7_57}.
\newblock URL \url{http://dx.doi.org/10.1007/3-540-54345-7\_57}.

\bibitem[Ghelli et~al.(2007)Ghelli, Colazzo, and
  Sartiani]{DBLP:conf/dbpl/GhelliCS07}
G.~Ghelli, D.~Colazzo, and C.~Sartiani.
\newblock Efficient inclusion for a class of {XML} types with interleaving and
  counting.
\newblock In M.~Arenas and M.~I. Schwartzbach, editors, \emph{Database
  Programming Languages, 11th International Symposium, {DBPL} 2007, Vienna,
  Austria, September 23-24, 2007, Revised Selected Papers}, volume 4797 of
  \emph{Lecture Notes in Computer Science}, pages 231--245. Springer, 2007.
\newblock ISBN 978-3-540-75986-7.
\newblock \doi{10.1007/978-3-540-75987-4\_16}.
\newblock URL \url{http://dx.doi.org/10.1007/978-3-540-75987-4\_16}.

\bibitem[Harris and Seaborne(2013)]{Sparql}
S.~Harris and A.~Seaborne.
\newblock {SPARQL} 1.1 {Q}uery {L}anguage.
\newblock Technical report, World Wide Web Consortium, Mar 2013.
\newblock {W3C} Recommendation.

\bibitem[Libkin et~al.(2013)Libkin, Martens, and
  Vrgoc]{DBLP:conf/icdt/LibkinMV13}
L.~Libkin, W.~Martens, and D.~Vrgoc.
\newblock Querying graph databases with xpath.
\newblock In W.-C. Tan, G.~Guerrini, B.~Catania, and A.~Gounaris, editors,
  \emph{ICDT}, pages 129--140. ACM, 2013.
\newblock ISBN 978-1-4503-1598-2.

\bibitem[Martens et~al.(2006)Martens, Neven, Schwentick, and
  Bex]{DBLP:journals/tods/MartensNSB06}
W.~Martens, F.~Neven, T.~Schwentick, and G.~J. Bex.
\newblock Expressiveness and complexity of xml schema.
\newblock \emph{ACM Trans. Database Syst.}, 31\penalty0 (3):\penalty0 770--813,
  2006.

\bibitem[Mendelzon and Wood(1995)]{DBLP:journals/siamcomp/MendelzonW95}
A.~O. Mendelzon and P.~T. Wood.
\newblock Finding regular simple paths in graph databases.
\newblock \emph{SIAM J. Comput.}, 24\penalty0 (6):\penalty0 1235--1258, 1995.

\bibitem[P{\'e}rez et~al.(2010)P{\'e}rez, Arenas, and
  Gutierrez]{DBLP:journals/ws/PerezAG10}
J.~P{\'e}rez, M.~Arenas, and C.~Gutierrez.
\newblock nsparql: A navigational language for rdf.
\newblock \emph{J. Web Sem.}, 8\penalty0 (4):\penalty0 255--270, 2010.

\bibitem[Pfaltz and Rosenfeld(1969)]{DBLP:conf/ijcai/PfaltzR69}
J.~L. Pfaltz and A.~Rosenfeld.
\newblock Web grammars.
\newblock In D.~E. Walker and L.~M. Norton, editors, \emph{Proceedings of the
  1st International Joint Conference on Artificial Intelligence, Washington,
  DC, May 1969}, pages 609--620. William Kaufmann, 1969.
\newblock ISBN 0-934613-21-4.

\bibitem[Shao et~al.(2013)Shao, Wang, and Li]{DBLP:conf/sigmod/ShaoWL13}
B.~Shao, H.~Wang, and Y.~Li.
\newblock Trinity: a distributed graph engine on a memory cloud.
\newblock In K.~A. Ross, D.~Srivastava, and D.~Papadias, editors,
  \emph{Proceedings of the {ACM} {SIGMOD} International Conference on
  Management of Data, {SIGMOD} 2013, New York, NY, USA, June 22-27, 2013},
  pages 505--516. {ACM}, 2013.
\newblock ISBN 978-1-4503-2037-5.
\newblock \doi{10.1145/2463676.2467799}.
\newblock URL \url{http://doi.acm.org/10.1145/2463676.2467799}.

\bibitem[Smory?ski(1991)]{opac-b1083621}
C.~A. Smory?ski.
\newblock \emph{Logical number theory. 1. , an introduction}.
\newblock Universitext. Springer-Verlag, Berlin, New York, 1991.
\newblock ISBN 3-540-52236-0.
\newblock URL \url{http://opac.inria.fr/record=b1083621}.

\bibitem[Staworko et~al.(2015)Staworko, Boneva, Gayo, Hym, Prud'hommeaux, and
  Solbrig]{DBLP:conf/icdt/StaworkoBGHPS15}
S.~Staworko, I.~Boneva, J.~E.~L. Gayo, S.~Hym, E.~G. Prud'hommeaux, and H.~R.
  Solbrig.
\newblock Complexity and expressiveness of shex for {RDF}.
\newblock In M.~Arenas and M.~Ugarte, editors, \emph{18th International
  Conference on Database Theory, {ICDT} 2015, March 23-27, 2015, Brussels,
  Belgium}, volume~31 of \emph{LIPIcs}, pages 195--211. Schloss Dagstuhl -
  Leibniz-Zentrum fuer Informatik, 2015.
\newblock ISBN 978-3-939897-79-8.
\newblock \doi{10.4230/LIPIcs.ICDT.2015.195}.
\newblock URL \url{http://dx.doi.org/10.4230/LIPIcs.ICDT.2015.195}.

\bibitem[Thompson et~al.(2004)Thompson, Beech, Maloney, and
  Mendelsohn]{XMLSchema-part1}
H.~S. Thompson, D.~Beech, M.~Maloney, and N.~Mendelsohn.
\newblock {XML} {S}chema {P}art 1: {S}tructures {S}econd {E}dition.
\newblock Technical report, World Wide Web Consortium, Oct 2004.
\newblock {W3C} Recommendation.

\bibitem[Xie et~al.(2003)Xie, Dang, and Ibarra]{DBLP:conf/icalp/XieDI03}
G.~Xie, Z.~Dang, and O.~H. Ibarra.
\newblock A solvable class of quadratic diophantine equations with applications
  to verification of infinite-state systems.
\newblock In J.~C.~M. Baeten, J.~K. Lenstra, J.~Parrow, and G.~J. Woeginger,
  editors, \emph{Automata, Languages and Programming, 30th International
  Colloquium, {ICALP} 2003, Eindhoven, The Netherlands, June 30 - July 4, 2003.
  Proceedings}, volume 2719 of \emph{Lecture Notes in Computer Science}, pages
  668--680. Springer, 2003.
\newblock ISBN 3-540-40493-7.
\newblock \doi{10.1007/3-540-45061-0_53}.
\newblock URL \url{http://dx.doi.org/10.1007/3-540-45061-0\_53}.

\end{thebibliography}

\end{document}